\documentclass[aps, pra, twocolumn, superscriptaddress, 10pt]{revtex4-2}
\usepackage{amsthm}
\usepackage{amsmath,bm}
\usepackage{amssymb}
\usepackage{amsfonts}
\usepackage{graphicx}
\usepackage{txfonts}
\usepackage{xcolor}
\usepackage{braket}
\usepackage{bbm}
\usepackage{subcaption}
\usepackage{changes}
\usepackage{ragged2e}
\usepackage[colorlinks=true,linkcolor=blue,citecolor=blue,urlcolor=blue]{hyperref}
\usepackage[capitalise]{cleveref}
\usepackage{booktabs}
\usepackage{diagbox}

\newcommand{\abs}[1]{\ensuremath{\left|{#1}\right|}}
\newcommand{\avg}[1]{\langle#1\rangle}		

\newcommand{\vect}[1]{{\bm{#1}}}			
 
\newcommand{\dd}{\text{d}} 
\newcommand{\id}{\mathbb{I}}

\newcommand{\ignore}[1]{} 

\newcounter{SaveEqnCntr}

\newcommand{\be}{\begin{equation}}
\newcommand{\ee}{\end{equation}}
\newcommand{\ba}{\begin{eqnarray}}
\newcommand{\ea}{\end{eqnarray}}
\newcommand{\proj}[1]{\ket{#1}\bra{#1}}

\def\>{\rangle}
\def\<{\langle}

\newcommand{\Tr}{\text{Tr}}

\newtheorem{theorem}{Theorem}
\newtheorem{corollary}{Corollary}

\newtheorem{proposition}{Proposition}

\newtheorem{lemma}{Lemma}

\newcommand{\Nc}{\mathcal{N}_{\mathrm{c}}}
\newcommand{\No}{\mathcal{N}_{\mathrm{o}}}
\newcommand{\Sv}{\mathcal{S}}
\newcommand{\Me}{\mathcal{M}}
\newcommand{\GHZ}{\mathrm{GHZ}}
\usepackage{dsfont}
\renewcommand{\id}{\mathds{1}}
\newcommand{\norm}[1]{\lVert#1\rVert}

\begin{document}
\title{Network nonlocality with tripartite entangled sources}

\author{Sudip Chakrabarty}
\email{sudip27042000@gmail.com}
\affiliation{S. N. Bose National Centre for Basic Sciences, Block JD, Sector III, Salt Lake, Kolkata 700 106, India}


\author{Amit Kundu}
\email{amit8967@gmail.com}
\affiliation{S. N. Bose National Centre for Basic Sciences, Block JD, Sector III, Salt Lake, Kolkata 700 106, India}
\affiliation{Optics and Quantum Information Group, The Institute of Mathematical Sciences, CIT Campus, Taramani, Chennai 600113, India}


\begin{abstract}
Quantum networks are built from independent sources, and their nonlocality has mostly been studied with sources that emit bipartite entangled states. A source of three entangled qubits is richer, since its nonlocality can be tested by the Mermin or the more demanding Svetlichny inequality. We ask how network nonlocality relates to these two single-source notions, and find that the answer depends on the wiring. When two central nodes without inputs share every source, Bell-state measurements turn every Svetlichny violation into a network violation, and even all entangled generalized Greenberger-Horne-Zeilinger states violate the network inequality. When a single central node is shared, the relation reverses, and a network violation requires a Mermin violation of the source. With more sources, the maximal violation depends on whether their number is even or odd, and we give tight bounds for both cases. These results show that the way multipartite sources are wired decides which kind of nonlocality a network inherits.
\end{abstract}

\maketitle
\section{Introduction}
\label{sec:intro}

Violation of Bell's inequality implies that the correlations of a single entangled state cannot be explained by a local hidden variable shared among the observers~\cite{Bell1964,CHSH1969,Brunner2014}, giving rise to the concept of nonlocality. Beyond its foundational significance, Bell nonlocality also enables practical tasks such as device-independent protocols~\cite{mckague, mckague12, supicrev}, quantum cryptography~\cite{acin07, Farkas2024, QCrev, Gras2021}, communication complexity~\cite{Rev1}, and certification of randomness~\cite{Piro2010, Acin2012}, and thus lies at the heart of quantum information science~\cite{Brunner2014}. A quantum network changes the central structure of the scenario. Here several independent sources distribute particles to different nodes and edges of the network. In the nodes the experimentalist can do joint measurement on particles produced by different independent sources. The natural local model assigns one hidden variable to each source and keeps these variables independent~\cite{Branciard2010,Branciard2012,Fritz2012}. Correlations that escape every such network local model are called network nonlocal. Since the sources are independent, the local correlations form a nonconvex set, the inequalities that bound them are nonlinear. A node can reveal nonlocality even if it always performs the same joint measurement~\cite{Branciard2012,Fritz2012,Renou2019}. The bilocal scenario of entanglement swapping~\cite{Zukowski1993,Branciard2010}, its extensions to star and other topologies~\cite{Tavakoli2014,Mukherjee2015,Rosset2016,Chaves2016,Tavakoli2017,Andreoli2017, kundu_2020, doolittle2023, chakrabarty_CV_network, Munshi2021} and the first experiments~\cite{Saunders2017} have made network nonlocality a central part of the theory of the quantum internet~\cite{Wehner2018,Tavakoli2022}, which is one of the central goals of recent quantum technologies.

However, almost all of the existing works on network nonlocality assume that every source emits a pair of particles. For two qubits there is a single notion of nonlocality, and the bilocality inequality is tied to it in a simple way. Two identical copies of a two-qubit state violate the bilocality inequality exactly when one copy violates the Clauser-Horne-Shimony-Holt (CHSH) inequality~\cite{Gisin2017,Andreoli2017}. A source that emits three particles is different. Three-qubit entanglement comes in the inequivalent GHZ and $W$ classes~\cite{GHSZ1990,Dur2000}, and nonlocality itself splits into several notions. The Mermin inequality~\cite{Mermin1990} excludes fully local models. The Svetlichny inequality~\cite{Svetlichny1987,Seevinck2002,Collins2002} also excludes models in which two of the three parties share arbitrary resources, so its violation certifies genuine tripartite nonlocality~\cite{Gallego2012,Bancal2013}. A state may violate the first and satisfy the second.

Network nonlocality with multipartite sources have been studied in a very limited number of works before. Trilocal inequalities for three tripartite sources were derived in Ref.~\cite{Mukherjee2017}, where it was also noticed that genuine tripartite entanglement is not needed for a violation. General nonlinear inequalities that cover GHZ sources were given in Ref.~\cite{Luo2018}. Closest to the present work, Ref.~\cite{Mukherjee2020} considered three tripartite sources whose particles are measured by two intermediate parties with a fixed GHZ-basis measurement. 
These works show that such networks can be nonlocal and that their violation can signal entanglement. See \cref{tab:literature} for a more detailed comparison with these related works. However, nonlocality in multipartite systems have different notions, making it distinct from the ususal Bell-CHSH scenario. All the mentioned related works do not tell us how the network nonlocality compares with the nonlocality of a single tripartite source.

\begin{table*}[t]
\caption{\justifying\small Network nonlocality with central nodes, and what the present work adds. ``Central node'' means a node that measures particles from several sources jointly.}
\label{tab:literature}
\begin{ruledtabular}
\begin{tabular}{llll}
Reference & Sources & Inputs of central nodes & Main result\\
\hline
\parbox[t]{0.15\textwidth}{\raggedright Mukherjee \textit{et al.}~\cite{Mukherjee2017}} & \parbox[t]{0.16\textwidth}{\raggedright three tripartite} & \parbox[t]{0.19\textwidth}{\raggedright two settings} & \parbox[t]{0.38\textwidth}{\raggedright trilocal inequalities, violation without genuine tripartite entanglement}\\[2pt]
\parbox[t]{0.15\textwidth}{\raggedright Luo~\cite{Luo2018}} & \parbox[t]{0.16\textwidth}{\raggedright general, including GHZ} & \parbox[t]{0.19\textwidth}{\raggedright general} & \parbox[t]{0.38\textwidth}{\raggedright efficiently computable nonlinear inequalities for general networks}\\[2pt]
\parbox[t]{0.15\textwidth}{\raggedright Mukherjee \textit{et al.}~\cite{Mukherjee2020}} & \parbox[t]{0.16\textwidth}{\raggedright bipartite chain; three tripartite} & \parbox[t]{0.19\textwidth}{\raggedright fixed Bell-state or GHZ-basis measurements} & \parbox[t]{0.38\textwidth}{\raggedright trilocal inequalities for the closed network with three sources, violation by GHZ, $W$ and some biseparable states, entanglement detection}\\[2pt]
\parbox[t]{0.15\textwidth}{\raggedright This work} & \parbox[t]{0.16\textwidth}{\raggedright tripartite, closed and open wirings} & \parbox[t]{0.19\textwidth}{\raggedright none} & \parbox[t]{0.38\textwidth}{\raggedright relation to the Mermin and Svetlichny values of one source, joint heralding of swapped entanglement, tight bounds for two-correlator inequalities}\\[2pt]
\end{tabular}
\end{ruledtabular}
\end{table*}

This leads to the question we address in this work. \emph{When independent sources emit three-qubit states, how is a violation of the network inequality related to the Mermin and Svetlichny nonlocality of a single source, and does this relation depend on how the sources are wired?} The answer is not obvious. A network test uses several copies of the source and joint measurements, so it might be stronger or weaker than either single-copy test.
We study the two simplest ways of wiring tripartite sources, shown in \cref{fig:networks}. In the closed network, two central nodes, Alice and Charlie, each receive one particle from every source and always perform the same measurement. The third particle of each source goes to an end node with two measurement settings. In the open network, a single central node receives one particle from every source, and the other two particles of each source go to two end nodes.

Our main result is that the two wirings are bounded by the two single-copy inequalities from opposite sides. In the closed network with Bell-state measurements, the two-copy value of the network inequality is never smaller than the normalized Svetlichny value of one copy (\cref{thm:svet}). Svetlichny nonlocality of the source is therefore sufficient for network nonlocality, and it is not necessary. All entangled generalized GHZ states, noisy $W$ states that satisfy the Svetlichny inequality and even biseparable states violate the network inequality. In the open network the relation is reversed. With a Bell-state measurement at the central node, a violation implies that one copy of the source violates the Mermin inequality (\cref{thm:mermin}).

Two further results follow from this analysis. First, for GHZ sources the entanglement that the closed network swaps to its end nodes is heralded only by the joint outcome of the two central nodes. Given the outcome of one central node alone, the end nodes share a separable state (\cref{prop:swap}). Second, we ask what happens with more sources. Because the central nodes have no input, the two correlators of the network inequality must be read from one fixed measurement. For an odd number of sources they involve anticommuting observables, and the maximal violation of the inequalities drops below its value for an even number of sources. We derive tight bounds for both cases (\cref{thm:parity}).


The paper is organized as follows. \Cref{sec:prelim} reviews the Mermin and Svetlichny inequalities and the network model, and fixes the measurement classes used throughout. \Cref{sec:closed} treats the closed network and its relation to the Svetlichny inequality. \Cref{sec:swap} looks at the state that the closed network prepares at its end nodes. \Cref{sec:open} treats the open network and its relation to the Mermin inequality. \Cref{sec:nlocal} turns to many sources and the parity effect. \Cref{sec:conclusions} discusses what the results mean and concludes. Proofs are given in Appendix~\cref{app:ineq,app:value,app:svet,app:mermin,app:parity,app:swap}, and the numerical methods in Appendix~\ref{app:numerics}.

\begin{figure*}[t]
\centering
\includegraphics[width=0.98\textwidth]{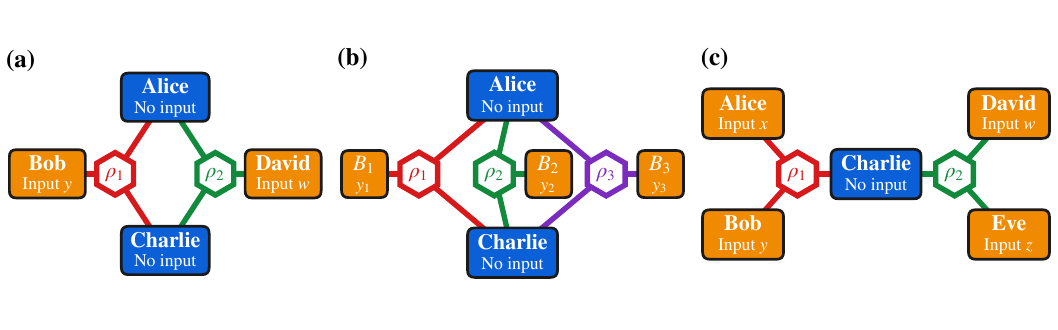}
\caption{\justifying\small Networks with tripartite sources. Hexagons (red, green and purple) are independent sources, each emitting three particles. Blue boxes are central nodes, which perform one fixed measurement. Orange boxes are end nodes, which choose between two dichotomic measurements. \emph{(a) Closed network with two sources.} The central nodes Alice and Charlie hold one particle from each source, and the end nodes Bob and David hold the third particles. \emph{(b) Closed network with three sources and end nodes $B_1$, $B_2$, $B_3$. (c) Open network.} The central node Charlie holds one particle from each source, and the other two particles of every source go to two end nodes.}
\label{fig:networks}
\end{figure*}

\section{Preliminaries}
\label{sec:prelim}

This section collects the three ingredients that the rest of the paper is built on. \Cref{sec:prelim-single} recalls how the nonlocality of a single three-qubit state is tested with the Mermin and Svetlichny inequalities, and introduces the correlation tensor in which all our results are written. \Cref{sec:prelim-net} explains what a network is, what the classical model with independent sources looks like, and which network shapes appear in the paper. \Cref{sec:prelim-classes} fixes the quantum measurements we allow, so that it is always clear which number is being compared with which.

\subsection{Nonlocality of a single tripartite state}
\label{sec:prelim-single}

Consider a single source that sends one qubit each to three parties. Each party chooses one of two measurements, labeled by $x,y,z\in\{0,1\}$, and obtains an outcome $\pm1$. We write the corresponding observables as $A_x$, $B_y$, $C_z$. A Bell inequality bounds a combination of the averages $\avg{A_xB_yC_z}$ for all classical models, and quantum states that exceed the bound are called nonlocal. For three parties there is more than one classical model to exclude. In a fully local model each party answers using only a shared hidden variable and its own setting. In a weaker model, two of the parties may in addition share arbitrary resources, even nonlocal ones. Two standard inequalities test these two models. Consider the operators
\begin{align}
M  &= A_0B_0C_1+A_0B_1C_0+A_1B_0C_0-A_1B_1C_1, \label{eq:M}\\
M' &= A_1B_1C_0+A_1B_0C_1+A_0B_1C_1-A_0B_0C_0. \label{eq:Mp}
\end{align}
The operator $M'$ is obtained from $M$ by exchanging the settings $0$ and $1$ of all parties. The Mermin and Svetlichny inequalities read
\begin{align}
\abs{\avg{M}}\le 2, \qquad \abs{\avg{M+M'}}\le 4. \label{eq:MS}
\end{align}
The first holds for every fully local model~\cite{Mermin1990}. The second holds even when two of the parties may share arbitrary nonlocal resources~\cite{Svetlichny1987,Bancal2013}, so its violation shows that all three parties take part in the nonlocal correlations. Quantum mechanics reaches $4$ and $4\sqrt2$, both with the GHZ state.

All averages of products of Pauli observables of a three-qubit state are collected in its correlation tensor
\begin{align}
T_{\mu k\nu}=\Tr\!\left[\rho\,\sigma_\mu\otimes\sigma_k\otimes\sigma_\nu\right], \label{eq:T}
\end{align}
with $\sigma_0=\id$ and the Pauli matrices $\sigma_{1,2,3}$. For measurements along unit vectors $\vect a,\vect b,\vect c$ the three-body correlator is $T(\vect a,\vect b,\vect c)=\sum_{ikm}T_{ikm}a_ib_kc_m$. We denote by $\Me(\rho)$ the maximum of $\abs{\avg M}$ and by $\Sv(\rho)$ the maximum of $\abs{\avg{M+M'}}$ over all projective qubit measurements. Both refer to a single copy of $\rho$, and their local bounds are $2$ and $4$. In the networks below, the middle index of $T$ always belongs to the particle that is sent to an end node.

Our examples are the GHZ state, the generalized GHZ states and the $W$ state,
\begin{align}
\ket{\GHZ}&=\tfrac{1}{\sqrt2}\left(\ket{000}+\ket{111}\right), \nonumber\\
\ket{\GHZ_\alpha}&=\cos\alpha\ket{000}+\sin\alpha\ket{111}, \nonumber\\
\ket{W}&=\tfrac{1}{\sqrt3}\left(\ket{001}+\ket{010}+\ket{100}\right), \label{eq:states}
\end{align}
with $0<\alpha<\pi/2$ for the generalized GHZ states, and their noisy versions
\begin{align}
\rho_v=v\proj{\psi}+(1-v)\frac{\id}{8} \label{eq:noisy}
\end{align}
of visibility $v$. The noise multiplies every entry of $T$ except $T_{000}$ by $v$. The nonzero entries of the GHZ tensor are $T_{111}=1$, $T_{122}=T_{212}=T_{221}=-1$ and $T_{330}=T_{303}=T_{033}=1$.

\subsection{Networks and \texorpdfstring{$n$}{n}-locality}
\label{sec:prelim-net}

A network consists of several sources and several nodes. Each source emits particles to some of the nodes, and each node measures the particles it receives. In this paper the sources are independent, for instance because they are far apart or are switched on independently. A node that receives particles from several sources and measures them together is called a central node. A node that receives a single particle and chooses between two measurements is called an end node.

In the classical description, each source $i$ emits a hidden variable $\lambda_i$ with probability density $q_i(\lambda_i)$, and the variables of different sources are independent. Every node produces its outcome from its input, if it has one, and from the hidden variables of the sources it is connected to. For the closed network of \cref{fig:networks}(a), this gives
\begin{align}
p(a,b,c,d & |y,w)=\int\! \dd\lambda_1 \dd\lambda_2\,  q_1(\lambda_1)\,q_2(\lambda_2)\nonumber\\
&\times p(a|\lambda_1,\lambda_2)\,p(c|\lambda_1,\lambda_2)
p(b|y,\lambda_1)\,p(d|w,\lambda_2). \label{eq:bilocal}
\end{align}
Here $a$ and $c$ are the outcomes of the central nodes Alice and Charlie, who have no input and see both hidden variables. The outcomes $b,d=\pm1$ belong to the end nodes Bob and David, who see only the hidden variable of their own source and choose inputs $y,w\in\{0,1\}$. Correlations of this form are called bilocal, and with $n$ sources $n$-local. Correlations that cannot be written in this form are network nonlocal. In the quantum network each source emits a state $\rho_i$, the central nodes measure jointly the particles they hold, and the end nodes measure dichotomic observables.

Besides the closed and open networks of \cref{fig:networks}, one more shape will be useful, the star network~\cite{Tavakoli2014}. A star has a single central node that receives one particle from every source, while the other particle of each source goes to its own end node. If Alice and Charlie in \cref{fig:networks}(a) were replaced by one node holding all their particles, the closed network would become a star with two sources. We use the star only as a reference on the classical side, in \cref{sec:closed-ineq}. The networks studied in this paper are the closed and open ones.

\subsection{Measurement classes}
\label{sec:prelim-classes}

The value of a Bell expression depends on the measurements that are used. It can be evaluated for one fixed strategy, maximized over a class of strategies, or maximized over all quantum strategies. These are different numbers, and comparing them carelessly can be misleading, so we keep them apart throughout. Most of our quantum statements use the following class. The central nodes perform Bell-state measurements, possibly in rotated local frames, and the end nodes perform projective qubit measurements. A Bell-state measurement on two qubits returns two independent signs, the eigenvalues of the commuting parities $\sigma_3\otimes\sigma_3$ and $\sigma_1\otimes\sigma_1$, and the third parity follows from $\sigma_2\otimes\sigma_2=-(\sigma_1\otimes\sigma_1)(\sigma_3\otimes\sigma_3)$. In a frame given by orthonormal vectors $\{\vect e_1,\vect e_2,\vect e_3\}$, it returns the parities $\vect e_i\cdot\vect\sigma\otimes\vect e_i\cdot\vect\sigma$. \Cref{tab:classes} lists the quantities used in the paper together with the class in which each is computed, and whether it is known exactly.

\begin{table}[t]
\caption{\justifying\small Quantities used in the paper. All are normalized so that the local bound is $1$, except $\Me$ and $\Sv$, whose local bounds are $2$ and $4$. ``Exact'' means a proven maximum within the stated class.}
\label{tab:classes}
\begin{ruledtabular}
\begin{tabular}{lll}
Symbol & Central nodes & Status\\
\hline
$\Me$, $\Sv$ & \parbox[t]{0.3\columnwidth}{\raggedright none (one copy)} & \parbox[t]{0.36\columnwidth}{\raggedright exact for GHZ, numerical otherwise}\\[2pt]
$\Nc$ & \parbox[t]{0.3\columnwidth}{\raggedright Bell-state measurements} & \parbox[t]{0.36\columnwidth}{\raggedright closed form, \cref{eq:value}}\\[2pt]
$\Nc^{\rm prod}$ & \parbox[t]{0.3\columnwidth}{\raggedright product projective} & \parbox[t]{0.36\columnwidth}{\raggedright numerical lower bound}\\[2pt]
$\No$ & \parbox[t]{0.3\columnwidth}{\raggedright Bell-state measurement} & \parbox[t]{0.36\columnwidth}{\raggedright closed form, \cref{eq:No}}\\[2pt]
\cref{thm:parity} & \parbox[t]{0.3\columnwidth}{\raggedright arbitrary} & \parbox[t]{0.36\columnwidth}{\raggedright exact (tight bounds)}\\[2pt]
\end{tabular}
\end{ruledtabular}
\end{table}

\section{Closed network and Svetlichny nonlocality}
\label{sec:closed}

\subsection{The inequality}
\label{sec:closed-ineq}

We first ask which correlations the closed network can produce classically. Alice and Charlie are two different nodes, but they are connected to the same sources and therefore see the same hidden variables $\lambda_1,\dots,\lambda_n$. Two nodes that receive exactly the same information can coordinate their answers perfectly, so classically they should be no more powerful than a single node that produces both outcomes. If Alice and Charlie are merged into one node in this way, the closed network turns into the star network described in \cref{sec:prelim-net}. The following lemma states this precisely.

\begin{lemma}\label{lem:merge}
Replace Alice and Charlie by a single node that is connected to all sources and outputs the pair $(a,c)$, and keep all sources and end nodes unchanged, then a probability distribution $p(a,c,b^{(1)},\dots,b^{(n)}|y_1,\dots,y_n)$ can be produced by an $n$-local model of the closed network if and only if it can be produced by an $n$-local model of this merged network.
\end{lemma}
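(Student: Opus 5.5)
We prove the two inclusions separately. In both, the $n$-local model of the closed network is the generalization of \cref{eq:bilocal}: independent variables $\lambda_1,\dots,\lambda_n$ with densities $q_i$, response functions $p(a|\lambda_1,\dots,\lambda_n)$ and $p(c|\lambda_1,\dots,\lambda_n)$ for the central nodes, and $p(b^{(i)}|y_i,\lambda_i)$ for the end nodes. In the merged network the single node has a joint response $p(a,c|\lambda_1,\dots,\lambda_n)$, and everything else is unchanged.

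\emph{Closed $\Rightarrow$ merged.} This direction is immediate. Given a model of the closed network, define the merged node's response as the product
\begin{align}
p(a,c|\vec\lambda)=p(a|\vec\lambda)\,p(c|\vec\lambda),
\end{align}
keeping the sources and end nodes as they are. Substituting this into the merged-network expression reproduces the closed-network integral term by term.

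\emph{Merged $\Rightarrow$ closed.} Here a possibly correlated joint response $p(a,c|\vec\lambda)$ must be produced by two separate nodes that share only $\vec\lambda$. The standard route is to make all local randomness deterministic and then absorb the needed randomness into the hidden variables.

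First, write $p(a,c|\vec\lambda)$ as a mixture of deterministic pairs. Introduce an auxiliary variable $\mu$, uniform on $[0,1]$. Order the finitely many outcome pairs $(a,c)$ and let $f(\vec\lambda,\mu)$ be the pair whose cumulative interval contains $\mu$; this is the inverse-CDF construction.

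Second, the auxiliary variable must not break source independence. We attach $\mu$ to source $1$ by setting $\lambda_1'=(\lambda_1,\mu)$ with density $q_1(\lambda_1)\,\mathrm{d}\mu$. The sources remain independent, and the end node of source $1$ simply ignores $\mu$, so its response $p(b^{(1)}|y_1,\lambda_1)$ is unchanged.

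Third, Alice outputs the first component of $f(\vec\lambda',\mu)$ and Charlie the second. Both are connected to every source, so both see $\vec\lambda'$ and can compute $f$. Integrating over $\mu$ returns exactly $p(a,c|\vec\lambda)$, and therefore the full distribution is reproduced.

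The only delicate points are the following:
\begin{itemize}
\item the construction needs at least one source, which is automatic here;
\item the map $f$ must be measurable, which holds because the outcome set is finite and the CDF construction is measurable in $\vec\lambda$ whenever $p(a,c|\vec\lambda)$ is.
\end{itemize}

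The main conceptual obstacle is the second step: the shared randomness must be absorbed into a source variable without destroying independence or changing what the end nodes see. Attaching $\mu$ to one source's variable, which its end node ignores, resolves this.
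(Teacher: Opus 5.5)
Your proof is correct and follows the paper's own argument. For closed $\Rightarrow$ merged you take the product response, and for the converse you write $p(a,c|\vec\lambda)$ as a deterministic function of $\vec\lambda$ and an independent $\mu$, attach $\mu$ to the first source, let Alice and Charlie output the two components, and let the first end node ignore $\mu$; your inverse-CDF and measurability remarks only make explicit what the paper states in one line (and in your third step you mean $f(\vec\lambda,\mu)$, not $f(\vec\lambda',\mu)$).
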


The proof in Appendix~\ref{app:ineq} rests on one observation. Whatever randomness a merged node uses to correlate $a$ and $c$ can be sent along with one of the sources, so that Alice and Charlie both receive it. The lemma does not say that the closed network and the star are the same network. They differ in the quantum case, which is our subject. It only says that their sets of classical correlations coincide. This has two consequences. First, any inequality that holds for the star also holds for the closed network, and inequalities for the star are well known~\cite{Tavakoli2014}. Second, every difference between the two networks comes from quantum mechanics. A merged quantum node could measure all its particles jointly, whereas Alice and Charlie can only perform a product of two measurements, one on each set of particles. The closed network asks what tripartite sources can achieve under this restriction.

Let $f_0$ and $f_1$ be any two functions of the outcomes $(a,c)$ of the central nodes with values $\pm1$, and let $b^{(i)}_{y_i}$ be the outcome of the $i$th end node for input $y_i$. We define
\begin{align}
I&=\frac{1}{2^n}\sum_{y_1\dots y_n}\avg{f_0\, b^{(1)}_{y_1}\cdots b^{(n)}_{y_n}}, \nonumber\\
J&=\frac{1}{2^n}\sum_{y_1\dots y_n}(-1)^{y_1+\dots+y_n}\avg{f_1\, b^{(1)}_{y_1}\cdots b^{(n)}_{y_n}}. \label{eq:IJ}
\end{align}
The functions $f_0$ and $f_1$ are fixed before the experiment. They simply turn the outcomes of the central nodes into two signs, one for each correlator.

\begin{theorem}\label{thm:ineq}
Every $n$-local correlation of the closed network satisfies
\begin{align}
\abs{I}^{1/n}+\abs{J}^{1/n}\le 1. \label{eq:ineq}
\end{align}
\end{theorem}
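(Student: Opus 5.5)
By \cref{lem:merge}, it suffices to prove \cref{eq:ineq} for $n$-local models of the merged network. There a single central node sees $(\lambda_1,\dots,\lambda_n)$ and outputs $(a,c)$, which makes it a star network~\cite{Tavakoli2014}. Any local randomness of the merged node can be absorbed into one of the sources, and any local randomness of an end node can be absorbed into its own source. We may therefore assume all responses are deterministic functions of the hidden variables. In particular $f_0(a,c)$ and $f_1(a,c)$ become functions $F_0(\lambda_1,\dots,\lambda_n)$ and $F_1(\lambda_1,\dots,\lambda_n)$ with values $\pm1$, and $b^{(i)}_{y_i}$ becomes $b^{(i)}_{y_i}(\lambda_i)$.

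The key step is to factorize the sums over inputs. For each source define
\[
\beta^{(i)}_\pm(\lambda_i)=\tfrac12\bigl[b^{(i)}_0(\lambda_i)\pm b^{(i)}_1(\lambda_i)\bigr].
\]
Because the $b$'s are $\pm1$, these satisfy $\abs{\beta^{(i)}_+}+\abs{\beta^{(i)}_-}=1$ pointwise. The sums over $y_1,\dots,y_n$ in \cref{eq:IJ} then factor source by source, giving
\[
I=\int \prod_i q_i\, F_0\prod_i\beta^{(i)}_+,\qquad J=\int\prod_i q_i\,F_1\prod_i\beta^{(i)}_-.
\]
Using $\abs{F_k}=1$ and the independence of the $q_i$, we obtain $\abs{I}\le\prod_i P_i$ and $\abs{J}\le\prod_i Q_i$, where
\[
P_i=\int q_i\abs{\beta^{(i)}_+},\qquad Q_i=\int q_i\abs{\beta^{(i)}_-}.
\]
Integrating the pointwise identity gives $P_i+Q_i=1$.

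To finish, I will use the inequality $\prod_i P_i^{1/n}+\prod_i Q_i^{1/n}\le 1$, which follows from the AM--GM inequality: $\prod_i P_i^{1/n}\le\frac1n\sum_i P_i$, and similarly for the $Q_i$. Adding these gives at most $\frac1n\sum_i(P_i+Q_i)=1$, which proves \cref{eq:ineq}.

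The main obstacle is justifying that the central functions can be bounded by $1$ in absolute value before the factorization. The central node depends on all $\lambda_i$ jointly, so nothing can be factored while it is present. This is why the reduction to determinism and the triangle inequality must come first. Independence is then used only on the product of the $\abs{\beta^{(i)}_\pm}$. A second subtlety is that the bilocal form \cref{eq:bilocal} is not immediately the merged form. This is handled by citing \cref{lem:merge}, which says the correlations of the two networks coincide classically.
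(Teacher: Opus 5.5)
Your proof is correct and is essentially the paper's argument: reduce to a single merged node via \cref{lem:merge}, factorize the sums over inputs, drop the central sign using $\abs{f_j}\le 1$, factor the remaining integral by independence of the sources, and finish with AM--GM. The only difference is in how the per-source bound is obtained. You pass to deterministic responses to get $P_i+Q_i=1$ exactly. This is convenient but not necessary, despite what your last paragraph suggests, since conditional means $\phi_j(\vect\lambda)\in[-1,1]$ work equally well. The paper keeps mean values in $[-1,1]$, gets $x_i+y_i\le1$ from $\tfrac12\abs{s+t}+\tfrac12\abs{s-t}=\max(\abs s,\abs t)$, and uses a normalized AM--GM that yields the stronger bound $\prod_i x_i^{1/n}+\prod_i y_i^{1/n}\le\prod_i(x_i+y_i)^{1/n}$, a form it reuses in the proof of \cref{thm:parity}.
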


For $n=2$ this is the bilocality inequality $\sqrt{\abs I}+\sqrt{\abs J}\le1$ of Ref.~\cite{Branciard2012}, written for central nodes without input. A short proof is given in Appendix \ref{app:ineq}.

\subsection{Quantum value with Bell-state measurements}
\label{sec:closed-value}

We now take two identical sources and let both central nodes perform Bell-state measurements, Alice in the frame $\{\vect e_i\}$ and Charlie in the frame $\{\vect f_j\}$. We write $A_i=\vect e_i\cdot\vect\sigma$ and $C_j=\vect f_j\cdot\vect\sigma$, and set $A_0=C_0=\id$. The index $0$ means that this central node is ignored. As $f_0$ we take the product of Alice's parity $A_\mu\otimes A_\mu$ with Charlie's parity $C_\nu\otimes C_\nu$, and as $f_1$ the analogous product with indices $(\mu',\nu')$.

With this choice, $I$ factorizes into one term per source, because every operator in it is a product of a part acting on source 1 and a part acting on source 2. Each term contains the vector
\begin{align}
\vect\tau_{\mu\nu}=\left(\Tr\!\left[\rho\,A_\mu\otimes\sigma_k\otimes C_\nu\right]\right)_{k=1,2,3}. \label{eq:tau}
\end{align}
It describes the particle of the end node as seen through the observables $A_\mu$ and $C_\nu$ of the central nodes, much like a Bloch vector. Its length never exceeds one.

The end node turns its two settings $\vect b_0,\vect b_1$ into two perpendicular directions, $\vect b_0+\vect b_1\propto\vect n$ for $I$ and $\vect b_0-\vect b_1\propto\vect n'$ for $J$. The network value therefore depends on how much of $\vect\tau_{\mu\nu}$ and $\vect\tau_{\mu'\nu'}$ two perpendicular directions can capture. For two vectors $\vect\tau,\vect\tau'$ this is
\begin{align}
\Lambda(\vect\tau,\vect\tau')&=\max_{\vect n\perp\vect n'}\left[(\vect\tau\cdot\vect n)^2+(\vect\tau'\cdot\vect n')^2\right]\nonumber\\
&=\frac{1}{2}\left[s+\sqrt{s^2-4\left(\vect\tau\cdot\vect\tau'\right)^2}\,\right], \label{eq:Lambda}
\end{align}
with $s=\abs{\vect\tau}^2+\abs{\vect\tau'}^2$. It equals $\abs{\vect\tau}^2+\abs{\vect\tau'}^2$ when the two vectors are orthogonal, and the larger of the two squared lengths when they are parallel.

\begin{proposition}\label{prop:value}
For two copies of a three-qubit state $\rho$, Bell-state measurements at the central nodes and projective measurements at the end nodes, the largest value of $\sqrt{\abs I}+\sqrt{\abs J}$ is $\Nc(\rho)$, with
\begin{align}
\Nc(\rho)^2=\max_{\{\vect e_i\},\{\vect f_j\}}\ \max_{\mu\nu,\,\mu'\nu'}\ \Lambda\!\left(\vect\tau_{\mu\nu},\vect\tau_{\mu'\nu'}\right). \label{eq:value}
\end{align}
The maxima run over the frames of the two central nodes and over the indices $\mu,\nu,\mu',\nu'\in\{0,1,2,3\}$.
\end{proposition}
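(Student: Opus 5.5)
Fix the frames $\{\vect e_i\}$, $\{\vect f_j\}$ and the index pairs $(\mu,\nu)$, $(\mu',\nu')$, and first compute $I$ and $J$ explicitly. The source is $\rho\otimes\rho$. Alice holds qubit $1$ of each copy, Bob the middle qubit of copy $1$, David the middle qubit of copy $2$, and Charlie qubit $3$ of each copy. With $f_0=(A_\mu\otimes A_\mu)(C_\nu\otimes C_\nu)$, the operator in $\avg{f_0\,b_y d_w}$ is a tensor product of $A_\mu\otimes\vect b_y\cdot\vect\sigma\otimes C_\nu$ on copy $1$ and $A_\mu\otimes\vect d_w\cdot\vect\sigma\otimes C_\nu$ on copy $2$. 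Hence
\begin{align}
\avg{f_0\, b_y d_w}=(\vect\tau_{\mu\nu}\cdot\vect b_y)(\vect\tau_{\mu\nu}\cdot\vect d_w).
\end{align}
Summing over $y,w$ with the weights of \cref{eq:IJ} gives the factorizations
\begin{align}
I&=\tfrac14\,[\vect\tau_{\mu\nu}\cdot(\vect b_0+\vect b_1)]\,[\vect\tau_{\mu\nu}\cdot(\vect d_0+\vect d_1)],\\
J&=\tfrac14\,[\vect\tau_{\mu'\nu'}\cdot(\vect b_0-\vect b_1)]\,[\vect\tau_{\mu'\nu'}\cdot(\vect d_0-\vect d_1)].
\end{align}
Projective end-node measurements may be taken with Bloch vectors of unit length. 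Degenerate projectors ($\pm\id$) give a vector of length at most one inside the sum and so only lower the value.

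Next, parametrize Bob's settings by $\vect b_0\pm\vect b_1=2\cos\theta\,\vect n$ and $2\sin\theta\,\vect n'$ respectively, where $\vect n\perp\vect n'$ are unit vectors. This is the standard fact that the sum and difference of two unit vectors are orthogonal, with lengths $2\cos\theta$ and $2\sin\theta$. David's settings are written the same way with $\phi$, $\vect m$, $\vect m'$. Writing $x=\abs{\vect\tau_{\mu\nu}\cdot\vect n}$, $x'=\abs{\vect\tau_{\mu'\nu'}\cdot\vect n'}$ and similarly $z$, $z'$ for David, we get
\begin{align}
\sqrt{\abs I}+\sqrt{\abs J}=\sqrt{\cos\theta\cos\phi\,xz}+\sqrt{\sin\theta\sin\phi\,x'z'}.
\end{align}
Apply Cauchy--Schwarz to the vectors $(\sqrt{\cos\theta\, x},\sqrt{\sin\theta\, x'})$ and $(\sqrt{\cos\phi\, z},\sqrt{\sin\phi\, z'})$. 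This bounds the expression by the geometric mean of $\cos\theta\,x+\sin\theta\,x'$ and $\cos\phi\,z+\sin\phi\,z'$. Maximizing over the angles bounds each factor by $\sqrt{x^2+x'^2}$ (respectively $\sqrt{z^2+z'^2}$). Maximizing over orthogonal pairs then bounds each factor by $\sqrt{\Lambda(\vect\tau_{\mu\nu},\vect\tau_{\mu'\nu'})}$. The value is therefore at most $\sqrt{\Lambda}$.

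For attainment, take David's settings to copy Bob's: choose the optimal $\vect n,\vect n'$ from \cref{eq:Lambda} for both parties and set $\tan\theta=\tan\phi=x'/x$. Each inequality is then saturated. The signs of $I$ and $J$ are irrelevant because absolute values appear.

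It remains to check that the parametrization of the central nodes is exhaustive. The Bell-state measurement in a given frame has outcomes that determine all parities $A_i\otimes A_i$, with $A_0\otimes A_0=\id$ corresponding to ignoring that node. One should argue that a general $\pm1$ function $f_0(a,c)$ of the four-outcome results is a linear combination $\sum_{\mu\nu}\alpha_{\mu\nu}(A_\mu\otimes A_\mu)(C_\nu\otimes C_\nu)$ with $\sum\abs{\alpha}\le1$. The Fourier expansion over $\mathbb Z_2^2\times\mathbb Z_2^2$ gives such a combination. Although $I$ is then not simply a product, for fixed end-node settings it is bilinear in the $\alpha$'s and the analogous $\beta$'s.

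This is the step I expect to be the main obstacle. General $f$ need not reduce to single characters, so I would argue convexity as follows. For fixed Bob and David settings, $I$ is a linear functional of $f_0$ over the set $\{\pm1\text{-valued } f_0\}$. This set is the vertex set of a cube, and its extreme values are achieved there. The difficulty is showing that restricting $f_0$ to a single character loses nothing after the optimization over $\sqrt{\abs I}+\sqrt{\abs J}$.

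Alternatively, if the proposition is read with $f_0$, $f_1$ fixed to the parity products described just before it (as the text indicates), this step is unnecessary. The result is then the maximum of $\sqrt{\Lambda}$ over frames and indices, so $\Nc(\rho)^2$ equals \cref{eq:value}. I would adopt that reading and state it explicitly.
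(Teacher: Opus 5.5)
Your core argument is correct and is essentially the paper's proof in Appendix~\ref{app:value}. It uses the same factorization into one factor per source, the same decomposition of $\vect b_0\pm\vect b_1$, and two Cauchy--Schwarz steps that end at $\sqrt{\Lambda}$. The equality conditions are also the same: $\theta=\phi$, $\tan\theta=x'/x$, and identical directions at both end nodes. The one difference in route is grouping. You apply Cauchy--Schwarz per end node, whereas the paper first removes the angles via $\cos(\theta_1-\theta_2)\le1$ and then equalizes the directions of the two end nodes. You also do not derive the closed form in \cref{eq:Lambda}. The paper obtains it by optimizing $\vect n'$ first and taking the largest eigenvalue of $\vect\tau\vect\tau^{T}-\vect\tau'\vect\tau'^{T}$. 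This is only needed if $\Lambda$ is meant as that formula rather than as the maximum over $\vect n\perp\vect n'$.

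Two side remarks in your proposal are wrong, though.

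\emph{Degenerate observables.} Trivial observables $B_y=\pm\id$ do not ``only lower the value''. They bring in $\Tr[\rho\,A_\mu\otimes\id\otimes C_\nu]$, which is not a component of $\vect\tau_{\mu\nu}$. For $\rho=\id/8$ all $\vect\tau_{\mu\nu}$ vanish, so \cref{eq:value} gives $\Nc=0$, yet $B_y=D_w=\id$ with $\mu=\nu=0$ gives $I=1$. The proposition must therefore be read, as in the paper's proof, with $B_y=\vect b_y\cdot\vect\sigma$ for unit $\vect b_y$. One can check that degenerate settings never exceed $\max(1,\sqrt{\Nc})$, so they do not matter for violations.

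\emph{Fourier expansion.} A $\pm1$ function $f_0(a,c)$ does not expand into parity products with $\sum\abs{\alpha_{\mu\nu}}\le1$. Parseval gives $\sum\alpha_{\mu\nu}^2=1$ with $\abs{\alpha_{\mu\nu}}\le1$, hence $\sum\abs{\alpha_{\mu\nu}}\ge1$. Equality holds only when $f_0$ is $\pm$ a single parity product. You do not need that aside. The proposition, like its proof in the paper, concerns $f_0$ and $f_1$ fixed to parity products, which is exactly the reading you adopt.
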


The proof is in Appendix~\ref{app:value}. Since every $\vect\tau_{\mu\nu}$ has length at most one, $\Lambda\le2$ and $\Nc\le\sqrt2$. If Charlie is ignored ($\nu=\nu'=0$), the vectors $\vect\tau_{\mu0}$ are the rows of the correlation matrix of the reduced state of Alice and the end node, and \cref{eq:value} returns half of its CHSH value~\cite{Horodecki1995,Gisin2017}. We denote this single-node value by $\mathcal N_A$, and by $\mathcal N_C$ the value with Alice ignored.

\subsection{Svetlichny nonlocality is sufficient}
\label{sec:closed-svet}

The closed-network value is built from pairs of three-body correlators in which the end node uses perpendicular directions. Once the Svetlichny operator is sorted by the settings of one party, it has the same structure. This observation gives our first main result.

\begin{theorem}\label{thm:svet}
For every three-qubit state $\rho$, the two-copy closed-network value with Bell-state measurements is at least the normalized single-copy Svetlichny value,
\begin{align}
\Nc(\rho)\ \ge\ \frac{\Sv(\rho)}{4}. \label{eq:svet}
\end{align}
Hence every state that violates the Svetlichny inequality violates the closed-network inequality~\eqref{eq:ineq} with two copies.
\end{theorem}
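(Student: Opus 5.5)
The plan is to decompose the Svetlichny operator along the settings of the end-node party (the middle index of $T$). Then I will show that its expectation is controlled by a pair of vectors that is a sub-convex combination of pairs $(\vect\tau_{\mu\nu},\vect\tau_{\mu'\nu'})$ built from Bell-measurement frames. Convexity of $\Lambda$ then finishes the argument.

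\textbf{Step 1 (sorting by the end node).} From \cref{eq:M,eq:Mp},
\begin{align}
M+M'=(B_0+B_1)X+(B_0-B_1)Y,
\end{align}
with $X=A_0C_1+A_1C_0$ and $Y=A_1C_1-A_0C_0$. I parametrize Alice's unit vectors as $\vect a_{0,1}=\cos\alpha\,\vect u\pm\sin\alpha\,\vect u'$ with $\vect u\perp\vect u'$. Charlie's are $\vect c_{0,1}=\cos\gamma\,\vect v\pm\sin\gamma\,\vect v'$, and Bob's satisfy $\vect b_0\pm\vect b_1=2\cos\beta\,\vect n,\ 2\sin\beta\,\vect n'$ with $\vect n\perp\vect n'$. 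Completing $\{\vect u,\vect u'\}$ and $\{\vect v,\vect v'\}$ to orthonormal frames $\{\vect e_i\}$, $\{\vect f_j\}$ gives
\begin{align}
X&=2(cc'\,A_1C_1-ss'\,A_2C_2),\\
Y&=-2(cs'\,A_1C_2+sc'\,A_2C_1),
\end{align}
where $c=\cos\alpha$, $s=\sin\alpha$, $c'=\cos\gamma$, $s'=\sin\gamma$. Hence $\avg{M+M'}=4[\cos\beta\,\vect\tau_X\cdot\vect n+\sin\beta\,\vect\tau_Y\cdot\vect n']$, with
\begin{align}
\vect\tau_X&=cc'\vect\tau_{11}-ss'\vect\tau_{22},\\
\vect\tau_Y&=-(cs'\vect\tau_{12}+sc'\vect\tau_{21}).
\end{align}
Cauchy--Schwarz in $(\cos\beta,\sin\beta)$ and maximization over $\vect n\perp\vect n'$ give $\Sv(\rho)^2/16\le\Lambda(\vect\tau_X,\vect\tau_Y)$ for the optimal Svetlichny settings.

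\textbf{Step 2 (convexity).} Regard $\Lambda$ as a function of the pair $(\vect\tau,\vect\tau')\in\mathbb R^3\times\mathbb R^3$. It is a maximum of convex quadratic forms, so it is convex. It is even in each argument separately, and $\Lambda(0,0)=0$. Now $\vect\tau_X$ is a combination of $\vect\tau_{11},\vect\tau_{22}$ with absolute weights $w=|cc'|$, $x=|ss'|$, and $w+x\le1$. Likewise $\vect\tau_Y$ is a combination of $\vect\tau_{12},\vect\tau_{21}$ with weights $y=|cs'|$, $z=|sc'|$, and $y+z\le1$. I will choose a transport plan $\lambda_{ij}\ge0$ between the masses $(w,x,\text{pad})$ and $(y,z,\text{pad})$, padding with the zero vector so that both totals equal $1$. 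Then I write $(\vect\tau_X,\vect\tau_Y)$ as a convex combination of pairs of the form $(\pm\vect\tau_{\mu\nu},\pm\vect\tau_{\mu'\nu'})$, $(\pm\vect\tau_{\mu\nu},0)$, $(0,\pm\vect\tau_{\mu'\nu'})$ and $(0,0)$. Convexity and evenness then bound $\Lambda(\vect\tau_X,\vect\tau_Y)$ by the largest $\Lambda$ over these pairs.

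\textbf{Step 3 (matching to \cref{eq:value}).} Every pair in Step 2 is admissible in \cref{prop:value} with the frames $\{\vect e_i\},\{\vect f_j\}$. Signs are irrelevant to $\Lambda$. A pair with a zero entry satisfies $\Lambda(\vect\tau,0)=|\vect\tau|^2=\Lambda(\vect\tau,\vect\tau)$, which is realized by taking $\mu'\nu'=\mu\nu$. So $\Sv^2/16\le\Nc^2$, which gives \cref{eq:svet}. The corollary follows because $\Sv>4$ gives $\Nc>1$, which violates \cref{eq:ineq} with $n=2$.

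The main obstacle I expect is Step 2. The vectors $\vect\tau_X,\vect\tau_Y$ arise from rank-two operators on Alice$\otimes$Charlie, not from a single Bell-frame product. Moreover, the two weight totals $w+x=|\cos(\alpha\mp\gamma)|$ and $y+z$ generally differ, which is why the zero-vector padding and the evenness of $\Lambda$ are needed. I would check carefully that the sign bookkeeping in the transport decomposition works coordinatewise, which it should because $\Lambda$ is even in each argument separately.
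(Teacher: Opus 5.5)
Your proof is correct. Its first half (sorting $M+M'$ by the end node, splitting the settings of Alice and Charlie into perpendicular pairs, and expanding into Bell-frame correlators) is the same as the paper's, and the two routes differ only in the closing step.

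The paper never maximizes over the end-node directions before bounding. It keeps $\vect n,\vect n'$ at the optimal Svetlichny settings, so that in your notation the two brackets are the scalars $cc'\,\vect\tau_{11}\cdot\vect n-ss'\,\vect\tau_{22}\cdot\vect n$ and $cs'\,\vect\tau_{12}\cdot\vect n'+sc'\,\vect\tau_{21}\cdot\vect n'$. In each, the absolute weights sum to at most one, so each is bounded by the larger of its two correlators. Since $\Lambda$ is itself a maximum over $\vect n\perp\vect n'$, the matching pair, e.g.\ $(\vect\tau_{11},\vect\tau_{12})$, gives $\Lambda\ge(\vect\tau_{11}\cdot\vect n)^2+(\vect\tau_{12}\cdot\vect n')^2$ at those same directions. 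The two brackets are bounded separately before the Cauchy--Schwarz step in the end-node angle, so the unequal totals $w+x\neq y+z$ that you flagged never appear.

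You instead maximize over $\vect n\perp\vect n'$ first, and then need joint convexity of $\Lambda$ together with a padded coupling. That step is sound as you set it up:
the product coupling $\lambda_{ij}=p_iq_j$ of the padded weight vectors already works;
the signs ride along inside each pair and are removed by evenness in each argument;
and the padded pairs are harmless, since $\Lambda(\vect\tau,0)=\abs{\vect\tau}^2\le\Lambda(\vect\tau,\vect\tau')$ for every $\vect\tau'$.

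Your version buys a statement that does not refer to the end-node directions: $\Lambda$ at sub-convex combinations of Bell-frame vectors is dominated by its values on the pure pairs $(\vect\tau_{\mu\nu},\vect\tau_{\mu'\nu'})$. The paper's version is shorter and needs only the triangle inequality.
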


The proof is in Appendix~\ref{app:svet}. The reason why the network can do better is simple. Sorted by the two perpendicular directions of the end node, the Svetlichny expression contains four three-body correlators $X,Y,Z,W$, two for each direction. Alice and Charlie must switch between two settings, and as a result the correlators enter with weights whose absolute values add up to at most one. Each bracket is therefore a weighted average of two correlators, and the Svetlichny value is bounded by $\max(X^2,Y^2)+\max(Z^2,W^2)$. A central node has no settings to switch. Its Bell-state measurement delivers all parities in every run, so the network can select the larger correlator for each direction and reach this bound.

The theorem says that Svetlichny nonlocality is sufficient for a violation. The examples below show that it is far from necessary.

\subsection{Examples}
\label{sec:closed-examples}

\textit{GHZ states.} For the noisy GHZ state all vectors $\vect\tau_{\mu\nu}$ are multiplied by $v$. Take $f_0=A_3A_3$, Alice's $\sigma_3\otimes\sigma_3$ parity, with $\vect\tau_{30}=v\,\hat{\vect z}$, and $f_1=A_1A_1\,C_1C_1$, the product of the $\sigma_1\otimes\sigma_1$ parities of both central nodes, with $\vect\tau_{11}=v\,\hat{\vect x}$. The two vectors are orthogonal, and Bob and David measure $(\sigma_3\pm\sigma_1)/\sqrt2$. This gives $\sqrt2\,v$, and \cref{eq:value} cannot exceed it because both vectors already have the maximal length $v$. Therefore
\begin{align}
\Nc(\rho^{\GHZ}_v)=\sqrt2\,v=\frac{\Sv(\rho^{\GHZ}_v)}{4}, \label{eq:ghz}
\end{align}
and the bound of \cref{thm:svet} is saturated. The closed network and the Svetlichny inequality are both violated for $v>1/\sqrt2$, whereas the Mermin inequality is violated already for $v>1/2$. A numerical search over all product projective measurements of the central nodes also returned $\sqrt2\,v$ (Appendix~\ref{app:numerics}), which suggests that Bell-state measurements are optimal for GHZ sources.

The first correlator above uses Alice alone, and the second uses both central nodes. No strategy can do with only one of them. The reduced two-qubit states of the GHZ state are separable, and separable sources cannot violate a network inequality, whatever the other central node measures. Hence $\mathcal N_A,\mathcal N_C\le1$, and the nonlocality of the closed GHZ network needs both central nodes.

\textit{Generalized GHZ states.} Here $\vect\tau_{30}=\hat{\vect z}$ for every $\alpha$, while $\vect\tau_{11}=\sin2\alpha\,\hat{\vect x}$. The same strategy then gives the following.

\begin{corollary}\label{cor:gghz}
Every generalized GHZ state with $0<\alpha<\pi/2$ violates the closed-network inequality with two copies,
\begin{align}
\Nc(\GHZ_\alpha)\ \ge\ \sqrt{1+\sin^22\alpha}\ >\ 1. \label{eq:gghz}
\end{align}
\end{corollary}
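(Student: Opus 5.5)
The plan is to exhibit an explicit strategy within the class of \cref{prop:value}, namely the one already used for the GHZ state, and bound $\Nc$ from below by $\Lambda$ of two orthogonal vectors. Both central nodes measure in the computational frame, $\vect e_i=\vect f_i=\hat{\vect e}_i$, so that $A_i=C_i=\sigma_i$. For $f_0$ we take Alice's $\sigma_3\otimes\sigma_3$ parity alone, i.e.\ $(\mu,\nu)=(3,0)$. For $f_1$ we take the product of the $\sigma_1\otimes\sigma_1$ parities of both central nodes, i.e.\ $(\mu',\nu')=(1,1)$. By \cref{eq:value}, $\Nc(\GHZ_\alpha)^2\ge\Lambda(\vect\tau_{30},\vect\tau_{11})$.

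The main computation is the two vectors from \cref{eq:tau} for $\ket{\GHZ_\alpha}=\cos\alpha\ket{000}+\sin\alpha\ket{111}$, ordered as (Alice, end node, Charlie).
\begin{itemize}
\item For $\vect\tau_{30}$ with components $\Tr[\rho\,\sigma_3\otimes\sigma_k\otimes\id]$: the operators $\sigma_3\otimes\sigma_{1,2}\otimes\id$ flip only one qubit, so those terms vanish. The remaining term is $\avg{\sigma_3\otimes\sigma_3\otimes\id}=\cos^2\alpha+\sin^2\alpha=1$. Hence $\vect\tau_{30}=\hat{\vect z}$.
\item For $\vect\tau_{11}$ with components $\Tr[\rho\,\sigma_1\otimes\sigma_k\otimes\sigma_1]$: only $k=1,2$ flip all three qubits, so the $k=3$ term vanishes. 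We get $\avg{\sigma_1\sigma_1\sigma_1}=2\sin\alpha\cos\alpha=\sin2\alpha$, while $\avg{\sigma_1\sigma_2\sigma_1}$ is $\mathrm{Re}$ of $\pm i\cdot(\text{real})$, which vanishes. Hence $\vect\tau_{11}=\sin2\alpha\,\hat{\vect x}$.
\end{itemize}

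These two vectors are orthogonal, so by the remark after \cref{eq:Lambda}, $\Lambda=\abs{\vect\tau_{30}}^2+\abs{\vect\tau_{11}}^2=1+\sin^22\alpha$. Concretely, the end nodes measure along $(\hat{\vect z}\pm\hat{\vect x})/\sqrt2$, which gives $\sqrt{\abs I}+\sqrt{\abs J}=\sqrt{1+\sin^22\alpha}$ directly. Taking square roots yields \cref{eq:gghz}. Strict inequality $>1$ holds because $\sin2\alpha>0$ for $0<\alpha<\pi/2$, and then \cref{thm:ineq} with $n=2$ certifies the violation.

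The only step needing care is getting the phase conventions of $\sigma_2$ right, so that $\vect\tau_{11}$ has no $y$-component. Even if such a component were nonzero, however, it would only lengthen $\vect\tau_{11}$, and its $x$-component alone still yields the stated bound by choosing $\vect n'=\hat{\vect x}$. So the argument does not actually depend on that phase check.
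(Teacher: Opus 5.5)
Your argument is the paper's own: the same parities $(\mu,\nu)=(3,0)$ and $(\mu',\nu')=(1,1)$ in the standard frame, the same vectors $\vect\tau_{30}=\hat{\vect z}$ and $\vect\tau_{11}=\sin2\alpha\,\hat{\vect x}$ (both computed correctly), and orthogonality giving $\Lambda=1+\sin^22\alpha$ in \cref{eq:value}. The bound and the violation then follow from \cref{prop:value,thm:ineq}.

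One sentence should be removed or corrected. The end-node settings $(\hat{\vect z}\pm\hat{\vect x})/\sqrt2$ correspond to $\theta=\pi/4$ in \cref{eq:bn}. They give $I=\tfrac12$ and $J=\tfrac12\sin^22\alpha$, hence $\sqrt{\abs I}+\sqrt{\abs J}=(1+\sin2\alpha)/\sqrt2$, not $\sqrt{1+\sin^22\alpha}$. This value is strictly smaller unless $\sin2\alpha=1$. It does not exceed $1$ at all when $\sin2\alpha\le\sqrt2-1$, which is exactly the near-product regime that makes the corollary interesting.

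Those are the GHZ settings. For $\GHZ_\alpha$, Step 3 of Appendix~\ref{app:value} requires $\tan\theta=\sin2\alpha$, that is, $\cos\theta\,\hat{\vect z}\pm\sin\theta\,\hat{\vect x}$ for Bob and for David. This gives $I=1/(1+\sin^22\alpha)$ and $J=\sin^42\alpha/(1+\sin^22\alpha)$, and hence exactly $\sqrt{1+\sin^22\alpha}$.

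Your main chain goes through \cref{prop:value}, which already optimizes $\theta$, so the proof itself is unaffected. The explicit-strategy remark as written is false.
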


In contrast, the Svetlichny inequality is violated only for $\sin2\alpha>1/\sqrt2$~\cite{Ghose2009}, and the Mermin inequality only for $\sin2\alpha>1/2$~\cite{Scarani2001,Zukowski2002}. The difference is qualitative. As $\alpha$ approaches $0$ or $\pi/2$ the state approaches a product state, yet for every value in between the network inequality is violated. A numerical optimization of \cref{eq:value} over the frames returns the right-hand side of \cref{eq:gghz}, so we expect it to be the exact value, but we have not proven this. \Cref{fig:hierarchy}(a) compares the network value with the Svetlichny and Mermin values.

\textit{$W$ states.} For the $W$ state, $\vect\tau_{33}=-\hat{\vect z}$ and $\vect\tau_{13}=\tfrac23\hat{\vect x}$ are orthogonal, which gives $\Nc\ge\sqrt{13}/3\approx1.202$. The numerical optimization over frames finds no larger value. The Svetlichny value is smaller, $\Sv(W)/4\approx1.089$, so the network inequality tolerates more noise. Its critical visibility is $0.832$, compared with $0.919$ for the Svetlichny inequality. Unlike for GHZ states, Bell-state measurements are not the best choice for the $W$ state. A numerical search over product projective measurements of the two central nodes finds $\Nc^{\rm prod}\ge1.316$. Since a numerical search can only find lower bounds on a maximum, the corresponding critical visibility $0.760$ is an upper bound on the true threshold within this class.

The critical visibilities quoted here follow from a simple fact. For a fixed strategy, every network value scales linearly with $v$ (Appendix~\ref{app:numerics}), so a pure-state value $K$ gives the threshold $v_c=1/K$. \Cref{tab:values} collects the values, and \cref{fig:hierarchy}(b) shows the resulting ranges of visibility.

\begin{table}[t]
\caption{\justifying\small Values for the GHZ and $W$ states in units of the local bound. The critical visibility of the noisy state is the inverse of each entry. Entries marked $^{a}$ are analytical. Entries marked $^{n}$ are numerical maxima, which are lower bounds on the true maxima. For $\Nc(W)$ the value $\sqrt{13}/3$ is a proven lower bound, and the numerics find no larger value.}
\label{tab:values}
\begin{ruledtabular}
\begin{tabular}{lccccc}
State & $\Sv/4$ & $\Nc$ & $\Nc^{\rm prod}$ & $\No$ & $\Me/2$\\
\hline
GHZ & $1.414^{a}$ & $1.414^{a}$ & $1.414^{n}$ & $2.000^{a}$ & $2.000^{a}$\\
$W$ & $1.089^{n}$ & $1.202^{a}$ & $1.316^{n}$ & $1.521^{n}$ & $1.523^{n}$\\
\end{tabular}
\end{ruledtabular}
\end{table}

\textit{Biseparable states.} The converse of \cref{thm:svet} fails in the strongest possible way. Consider a Bell pair shared by Alice and the end node, with Charlie's particle in a product state,
\begin{align}
\ket{\Phi^+}_{AB}\otimes\ket{0}_C. \label{eq:bisep}
\end{align}
This state is biseparable, and $\Sv/4=1$. Yet $\vect\tau_{30}=\hat{\vect z}$ and $\vect\tau_{10}=\hat{\vect x}$ give $\Nc=\sqrt2$, because Alice alone runs the bilocal scenario of Ref.~\cite{Branciard2012}. A violation of the closed-network inequality therefore certifies neither genuine tripartite entanglement nor genuine tripartite nonlocality of the sources, in line with the observation of Ref.~\cite{Mukherjee2017}. Conversely, if the entangled pair is shared by Alice and Charlie and the end node holds the unentangled particle, all vectors $\vect\tau_{\mu\nu}$ are parallel to the Bloch vector of that particle and $\Nc=1$. Both observations agree with the numerical findings of Ref.~\cite{Mukherjee2020} for the closed network with three sources. What sets the GHZ and $W$ states apart from this example is that for them $\mathcal N_A,\mathcal N_C\le1$, so their violation cannot be produced by a single central node.

These examples settle how the closed-network inequality relates to the two single-copy inequalities. A Svetlichny violation implies a network violation, but not conversely. A Mermin violation neither implies nor is implied by a violation with Bell-state measurements, as the noisy GHZ state with $1/2<v\le1/\sqrt2$ and the generalized GHZ states with $\sin2\alpha\le1/2$ show. \Cref{fig:random}(a) illustrates \cref{thm:svet} on random states.

\begin{figure*}[t]
\centering
\includegraphics[width=0.94\textwidth]{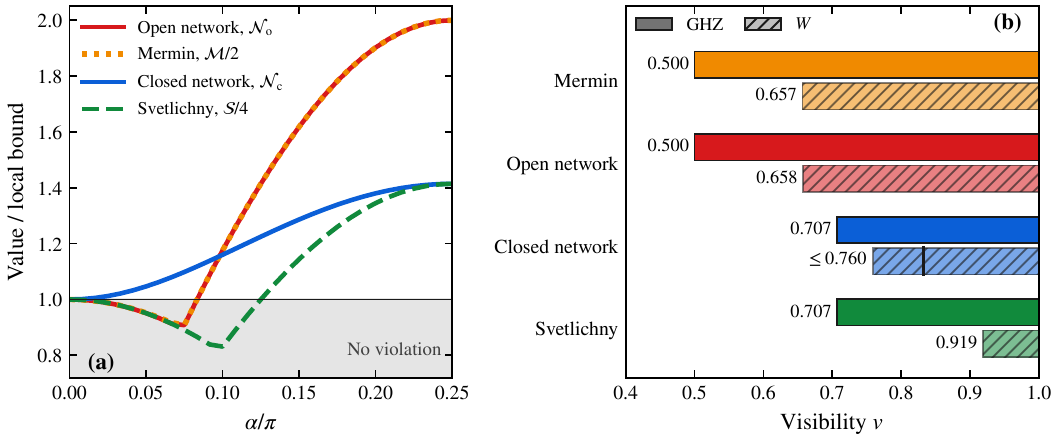}
\caption{\justifying\small (a) Generalized GHZ states. Closed-network value $\sqrt{1+\sin^22\alpha}$ of \cref{eq:gghz} (blue solid line), Svetlichny value $\Sv/4$ (green dashed line), open-network value $\No$ (red solid line) and Mermin value $\Me/2$ (orange dotted line), all in units of the local bound. The red and orange curves coincide wherever they exceed $1$. Values in the gray region do not violate the corresponding inequality. (b) Ranges of visibility in which noisy GHZ states (solid bars) and noisy $W$ states (hatched bars) violate each inequality. The number at the left end of each bar is the critical visibility. For the closed network, the hatched blue bar uses the numerical value $\Nc^{\rm prod}$, so its left end is an upper bound on the threshold, and the black tick marks the threshold $0.832$ for Bell-state measurements.}
\label{fig:hierarchy}
\end{figure*}

\begin{figure*}[t]
\centering
\includegraphics[width=0.94\textwidth]{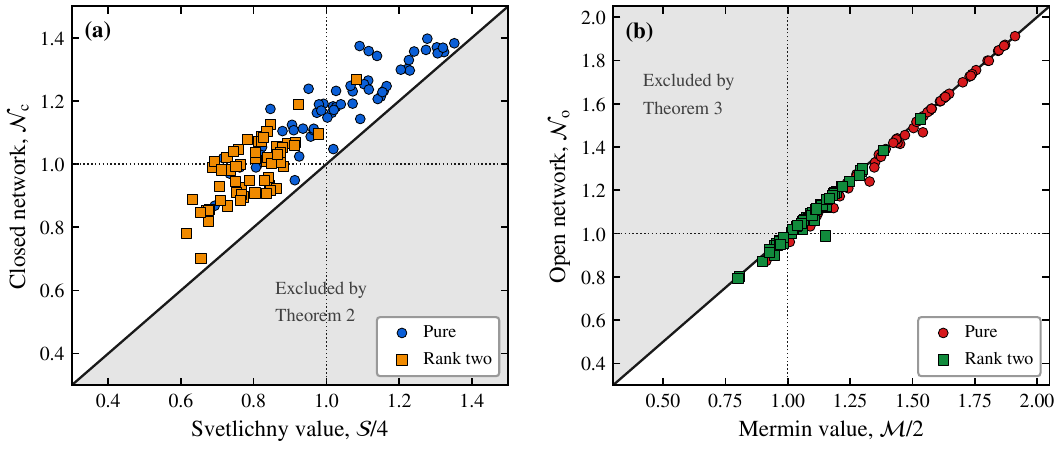}
\caption{\justifying\small Network values against single-copy values for 120 random three-qubit states (sampling described in Appendix~\ref{app:numerics}). Blue circles and red circles are pure states, orange squares and green squares are states of rank two. (a) Closed-network value $\Nc$ against $\Sv/4$. All points lie above the black diagonal, as required by \cref{thm:svet}. (b) Open-network value $\No$ against $\Me/2$. All points lie on or below the diagonal, as required by \cref{thm:mermin}. The plot is a consistency check of the theorems, not evidence for them.}
\label{fig:random}
\end{figure*}

\section{The swapped state at the end nodes}
\label{sec:swap}

We have just seen that for GHZ sources the nonlocality of the closed network needs both central nodes. A natural next question is whether the same holds for the state that the network prepares. In the bilocal scenario, a network violation goes together with entanglement swapping. The central node's measurement leaves the end nodes in an entangled state, although their particles never interacted~\cite{Zukowski1993,Branciard2010}. With tripartite sources the swapping involves both central nodes~\cite{Bose1998}.

Let both central nodes perform Bell-state measurements in the standard frame. We label Alice's outcome by the signs $a_z$ and $a_x$, the eigenvalues of $\sigma_3\otimes\sigma_3$ and $\sigma_1\otimes\sigma_1$, and Charlie's outcome by $c_z$ and $c_x$.

\begin{proposition}\label{prop:swap}
For two GHZ sources the outcomes satisfy $c_z=a_z$ with certainty, and Bob and David are left in the Bell state with
\begin{align}
\sigma_3\otimes\sigma_3=a_z, \qquad \sigma_1\otimes\sigma_1=a_xc_x. \label{eq:swap}
\end{align}
Conditioned on the outcome of only one central node, the state of Bob and David is separable. For $n$ GHZ sources and central nodes that measure in the $n$-qubit GHZ basis, the end nodes are left in an $n$-qubit GHZ state whose relative sign is the product of the signs found by the two central nodes, and again the outcome of one central node leaves them in a separable state.
\end{proposition}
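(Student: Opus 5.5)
Our plan is a direct calculation in the stabilizer formalism, which avoids expanding the twelve-qubit state explicitly. Label the qubits of source $i$ as $A_i$ (to Alice), $B_i$ (to the end node) and $C_i$ (to Charlie), in the order used for the middle index of $T$. The state $\ket{\GHZ}_{A_1B_1C_1}\otimes\ket{\GHZ}_{A_2B_2C_2}$ is the unique joint $+1$ eigenstate of the stabilizer group generated by $Z_{A_i}Z_{B_i}$, $Z_{B_i}Z_{C_i}$ and $X_{A_i}X_{B_i}X_{C_i}$ for $i=1,2$.

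The measured observables are $Z_{A_1}Z_{A_2}$, $X_{A_1}X_{A_2}$, $Z_{C_1}Z_{C_2}$ and $X_{C_1}X_{C_2}$. We will check which of their products lie in the stabilizer group, up to operators on $B_1B_2$. First, $Z_{A_1}Z_{A_2}\cdot Z_{C_1}Z_{C_2}=(Z_{A_1}Z_{B_1})(Z_{B_1}Z_{C_1})(Z_{A_2}Z_{B_2})(Z_{B_2}Z_{C_2})$ is a stabilizer, so $a_zc_z=+1$, which gives $c_z=a_z$. Second, $Z_{A_1}Z_{A_2}=(Z_{A_1}Z_{B_1})(Z_{A_2}Z_{B_2})\,Z_{B_1}Z_{B_2}$, so after the measurement $Z_{B_1}Z_{B_2}=a_z$. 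Third, $X_{A_1}X_{A_2}X_{C_1}X_{C_2}=(X_{A_1}X_{B_1}X_{C_1})(X_{A_2}X_{B_2}X_{C_2})\,X_{B_1}X_{B_2}$, so $X_{B_1}X_{B_2}=a_xc_x$. All four measured operators commute with these stabilizers on the $B$ side. The standard stabilizer update rule then says that the post-measurement state of $B_1B_2$ is stabilized by $a_zZZ$ and $a_xc_xXX$, and is therefore the stated Bell state. This proves \cref{eq:swap}.

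For the one-node claim, we condition on Alice's outcome alone and trace out Charlie's qubits, or equivalently average over Charlie's outcomes. Given $a_z$, the value $c_z=a_z$ is already fixed, while $c_x$ is uniformly random. The $X_{C_1}X_{C_2}$ parity anticommutes with the remaining stabilizer $Z_{B_i}Z_{C_i}$, so its value carries no information before Charlie measures. We must justify that $c_x$ is unbiased given Alice's outcome. The argument is that $c_x$ enters only through the product $a_xc_x$, and the marginal of $X_{C_1}X_{C_2}$ is zero because no element of the stabilizer group equals it up to $B$ and $A$ operators.

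The conditional state is then $\tfrac12$ of the sum of the two Bell states with $ZZ=a_z$ and $XX=\pm1$. This equals $\tfrac12(\proj{00}+\proj{11})$ or $\tfrac12(\proj{01}+\proj{10})$, which is separable. The same argument applies to Charlie's outcome alone by symmetry.

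For $n$ sources we will use the same approach. The $n$-qubit GHZ basis is the joint eigenbasis of $Z_{A_i}Z_{A_{i+1}}$ for $i=1,\dots,n-1$ and of $X^{\otimes n}$. Multiplying each measured $ZZ$ by source stabilizers transfers it to $Z_{B_i}Z_{B_{i+1}}$ and forces Charlie's $Z$-parities to agree with Alice's. The product of Alice's and Charlie's $X^{\otimes n}$ equals $\prod_i X_{A_i}X_{B_i}X_{C_i}$ times $X_B^{\otimes n}$. This gives a GHZ state with relative sign $a_xc_x$, and each factor is separately unbiased, which again gives a separable dephased GHZ mixture.

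The main obstacle is stating the unbiasedness of one node's $X$-sign rigorously. We will do this by showing that the operator $X_{C}^{\otimes n}\otimes(\text{anything on }A,B)$ has zero expectation, because it anticommutes with $Z_{B_1}Z_{C_1}$ while commuting with Alice's measured operators.
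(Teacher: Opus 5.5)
Your derivation of $c_z=a_z$ and of the Bell state in \cref{eq:swap} is the paper's argument, using the same three products of source stabilizers.

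For the one-node claim and for $n$ sources you stay with stabilizers, whereas the paper expands $\ket{\GHZ}^{\otimes n}=2^{-n/2}\sum_{\vect s}\ket{\vect s}_A\ket{\vect s}_B\ket{\vect s}_C$. From this it reads off that Alice's outcome leaves $(\ket{\vect k}_B\ket{\vect k}_C+\epsilon_A\ket{\bar{\vect k}}_B\ket{\bar{\vect k}}_C)/\sqrt2$. In that form, the equal weights of Charlie's two signs and the separability are both immediate, because $\ket{\vect k}_C\perp\ket{\bar{\vect k}}_C$ and tracing out $C$ removes the coherence. Your route treats every $n$ in the same language and would transfer to other stabilizer sources and parity measurements. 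The cost is that it must establish, as a separate step, that the other node's $X$-sign is unbiased.

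As written, that step is justified incorrectly.
\begin{itemize}
\item You claim that no stabilizer equals $X_{C_1}X_{C_2}$ up to $A$ and $B$ operators. This is false: your own third identity shows that $X_{A_1}X_{A_2}X_{B_1}X_{B_2}X_{C_1}X_{C_2}$ is a stabilizer.
\item You claim that $X_C^{\otimes n}\otimes(\text{anything on }A,B)$ has zero mean. This also fails: $X_A^{\otimes n}X_B^{\otimes n}X_C^{\otimes n}$ commutes with $Z_{B_1}Z_{C_1}$ and has mean $1$.
\item A vanishing marginal of $X_{C_1}X_{C_2}$ does not by itself fix its distribution conditioned on Alice's outcome.
\end{itemize}

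The correct statement is narrower. For every operator $O_A$ on Alice's qubits, $O_A\otimes\id_B\otimes X_C^{\otimes n}$ anticommutes with the stabilizer $Z_{B_1}Z_{C_1}$, and therefore has mean zero. Taking $O_A$ to be Alice's outcome projector shows that Charlie's $X$-sign is uniform given her outcome, since $c_z=a_z$ is already fixed. With this correction your argument closes.

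A shorter repair avoids Charlie's outcomes altogether. Conditioned on Alice alone, only $\rho_{AB}=2^{-n}\sum_{\vect s}\proj{\vect s}_A\otimes\proj{\vect s}_B$ matters. So any measurement by Alice, not only the GHZ basis, leaves the end nodes in a state diagonal in the computational basis, which is separable.
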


The proof in Appendix~\ref{app:swap} uses only the stabilizers of the GHZ state. The proposition makes the observation of the previous section concrete. Alice's outcome tells Bob and David that their outcomes in the $\sigma_3$ basis are perfectly correlated or anticorrelated. The relative sign of the entangled state, however, is the product $a_xc_x$ of one sign from each central node. Without Charlie's sign, the two possible Bell states are equally likely, and their mixture is separable. In this precise sense the swapped entanglement is heralded only by the joint outcome. The way a secret is split among parties in GHZ-based secret sharing~\cite{Hillery1999} is similar, but we make no information-theoretic claim here.

The $W$ state behaves differently. Two $W$ sources leave Bob and David in a Bell state with probability $4/9$, in a state of concurrence $4/5$ with probability $5/18$, and in a separable state otherwise, with an average concurrence of $2/3$. Here the outcome of one central node already swaps some entanglement, because the reduced states of the $W$ state are entangled.

\textit{Noise.} With noisy GHZ sources the certain event $c_z=a_z$ becomes a herald. It occurs with probability $(1+v^2)/2$ and leaves Bob and David in a Werner state~\cite{Werner1989} of visibility
\begin{align}
w=\frac{2v^2}{1+v^2}. \label{eq:w}
\end{align}
The complementary event leaves them maximally mixed (Appendix~\ref{app:swap}). A Werner state is entangled for $w>1/3$ and violates the CHSH inequality for $w>1/\sqrt2$, so the heralded state is entangled for $v>1/\sqrt5\approx0.447$ and violates the CHSH inequality for $v>(2\sqrt2-1)^{-1/2}\approx0.740$.

Compare this with the closed-network inequality, which is violated for $v>1/\sqrt2\approx0.707$. In the window $0.707<v\le0.740$ the network correlations violate the network inequality, while the heralded state of Bob and David does not violate the CHSH inequality. The network violation is therefore not inherited from a CHSH violation of the swapped state. The window is narrower than for bipartite sources. There, two Werner states of visibility $v$ swap to visibility $v^2$, and the CHSH threshold is $2^{-1/4}\approx0.841$. The herald $c_z=a_z$ discards some of the noise, as \cref{fig:swap} shows.

\begin{figure}[t]
\centering
\includegraphics[width=0.9\columnwidth]{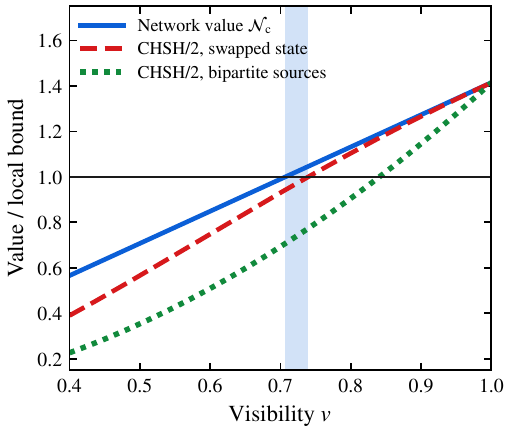}
\caption{\justifying\small Noisy GHZ sources of visibility $v$. Closed-network value $\Nc=\sqrt2\,v$ (blue solid line), half the CHSH value of the heralded swapped state of \cref{eq:w} (red dashed line), and half the CHSH value of the state swapped from two bipartite Werner states (green dotted line). In the shaded blue window the network inequality is violated while the heralded state does not violate the CHSH inequality.}
\label{fig:swap}
\end{figure}

\section{Open network and Mermin nonlocality}
\label{sec:open}

The closed network compared favorably with the Svetlichny test because its central nodes read all parities at once. We now change the wiring. In the open network of \cref{fig:networks}(c) a single central node, Charlie, holds one particle from each source, and the other two particles of every source go to two end nodes. Alice and Bob measure the particles of the first source with settings $x$ and $y$, and David and Eve those of the second source with settings $w$ and $z$. Each source now has two end nodes and only one particle at the central node. As we shall see, this reverses the relation found above.

The two end nodes of a source can be correlated through its hidden variable, so the single-node combinations used in \cref{eq:IJ} must be replaced by combinations of two end nodes. We use the two halves of the Mermin operator,
\begin{align}
U_0=\tfrac12\left(A_0B_0-A_1B_1\right), \qquad U_1=\tfrac12\left(A_0B_1+A_1B_0\right), \label{eq:kernel}
\end{align}
and the same combinations $V_0,V_1$ for David and Eve. With these, $M=2(U_0C_1+U_1C_0)$. In the Mermin test the third party couples $U_0$ to one setting and $U_1$ to the other. In the network, the central node couples them to two signs $g_0,g_1=\pm1$ computed from its single outcome, and we define
\begin{align}
I_{\rm o}=\avg{g_0\,U_0V_0}, \qquad J_{\rm o}=\avg{g_1\,U_1V_1}. \label{eq:IJopen}
\end{align}

\begin{proposition}\label{prop:open}
Every bilocal correlation of the open network satisfies
\begin{align}
\sqrt{\abs{I_{\rm o}}}+\sqrt{\abs{J_{\rm o}}}\le1, \label{eq:open}
\end{align}
and with $n$ sources the square roots are replaced by $n$th roots. The algebraic maximum of the left-hand side is $2$, and two GHZ sources reach it.
\end{proposition}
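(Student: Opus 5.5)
The plan has three parts: the bilocal bound, the algebraic maximum, and a two-GHZ strategy that reaches it.

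\textbf{Bilocal bound.} Write the bilocal model of the open network. Charlie produces $c$ from $(\lambda_1,\lambda_2)$, while Alice, Bob depend only on $\lambda_1$ and their inputs, and David, Eve only on $\lambda_2$ and theirs. Since $g_0,g_1$ are $\pm1$ functions of $c$, we may absorb them into Charlie's response and assume deterministic $\pm1$ functions $g_k(\lambda_1,\lambda_2)$, after averaging out local randomness. Define
\begin{align}
u_k(\lambda_1)=\tfrac12\bigl(\avg{A_0B_{k}}_{\lambda_1}\mp\avg{A_1B_{1-k}}_{\lambda_1}\bigr),
\end{align}
the conditional value of $U_k$, and similarly $v_k(\lambda_2)$. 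Given $\lambda_1$, the end nodes of source 1 act independently, so $\avg{A_xB_y}_{\lambda_1}=\alpha_x\beta_y$ with $\abs{\alpha_x},\abs{\beta_y}\le1$. The key single-source estimate is then
\begin{align}
\abs{u_0}+\abs{u_1}\le1.
\end{align}
This follows because $\abs{\alpha_0\beta_0-\alpha_1\beta_1}+\abs{\alpha_0\beta_1+\alpha_1\beta_0}\le2$, which is the CHSH/Mermin local bound for a product strategy; one checks it at the extreme points $\alpha,\beta=\pm1$. With this, $\abs{I_{\rm o}}\le\int q_1q_2\abs{u_0}\abs{v_0}=\bar u_0\bar v_0$, where bars denote averages of absolute values, and likewise $\abs{J_{\rm o}}\le\bar u_1\bar v_1$. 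Applying $\sqrt{xy}+\sqrt{zw}\le\sqrt{(x+z)(y+w)}$ (Cauchy--Schwarz) gives
\begin{align}
\sqrt{\abs{I_{\rm o}}}+\sqrt{\abs{J_{\rm o}}}\le\sqrt{(\bar u_0+\bar u_1)(\bar v_0+\bar v_1)}\le1.
\end{align}
This is exactly the argument behind \cref{thm:ineq}, and for $n$ sources it extends with the generalized Hölder inequality and $n$th roots. I expect the main care to be in justifying the factorization of Charlie's sign. It is harmless here because $g_k$ depends on $\lambda_1$ and $\lambda_2$ only through $c$, and we bound its absolute value by $1$ before integrating.

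\textbf{Algebraic maximum.} Each $U_k$ has operator norm at most $1$, since it is half a sum of two $\pm1$ products. Hence $\abs{I_{\rm o}},\abs{J_{\rm o}}\le1$ and the left-hand side is at most $2$.

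\textbf{GHZ attainment.} Take two GHZ sources, with Charlie performing a Bell-state measurement on his two qubits. Use $g_0$ equal to the $\sigma_1\otimes\sigma_1$ parity and $g_1$ equal to the $\sigma_2\otimes\sigma_2$ parity, the latter computed from the other two signs via $\sigma_2\otimes\sigma_2=-(\sigma_1\otimes\sigma_1)(\sigma_3\otimes\sigma_3)$. Let the end nodes measure $\sigma_1$ and $\sigma_2$. Using the GHZ tensor entries $T_{111}=1$ and $T_{122}=T_{212}=T_{221}=-1$, with Charlie's particle as the third index, the correlator $\avg{U_0\,C}$ with $C=\sigma_1$ equals $\tfrac12(T_{111}-T_{221})=1$, and $\avg{U_1\,\sigma_2}=\tfrac12(T_{122}+T_{212})=-1$. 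Because Charlie's parities are products of single-qubit operators across the two sources, each expectation factorizes, giving $I_{\rm o}=1\cdot1=1$ and $J_{\rm o}=(-1)(-1)=1$. Both terms thus reach $1$ and the sum reaches $2$. The only thing to double-check is the sign bookkeeping, which in any case is irrelevant after taking absolute values.
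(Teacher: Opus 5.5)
Your proposal is correct and follows essentially the same route as the paper. It uses the single-source estimate $\abs{u_0}+\abs{u_1}\le1$ for product responses, removes Charlie's signs via $\abs{g_k}\le1$, factorizes by independence, and closes with the Cauchy--Schwarz/H\"older step, which the paper obtains through the equivalent AM--GM argument; the operator-norm bound and the GHZ strategy are also the paper's. The one step to make explicit is why checking the extreme points $\pm1$ suffices: each $u_k$ is linear in each mean separately, so $\abs{u_0}+\abs{u_1}$ is convex in each. Alternatively, use $\abs{u_0}+\abs{u_1}=\max_\pm\abs{u_0\pm u_1}$, where each $u_0\pm u_1$ is half a product-strategy CHSH expression.
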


The local bound is proven in Appendix~\ref{app:ineq}. For GHZ sources, let Charlie perform a Bell-state measurement and set $g_0$ and $g_1$ to the parities of $\sigma_1\otimes\sigma_1$ and $\sigma_2\otimes\sigma_2$, while all end nodes measure $\sigma_1$ for setting $0$ and $\sigma_2$ for setting $1$. The GHZ correlators give $\avg{U_0\otimes\sigma_1}=1$ and $\avg{U_1\otimes\sigma_2}=-1$ for each source, so that $I_{\rm o}=J_{\rm o}=1$. No bilocal model can make both correlators reach their maximal size, which is the network form of the GHZ paradox~\cite{GHSZ1990,Mermin1990b}. The violation is a factor of two, larger than the factor $\sqrt2$ in the closed network and in the bilocal scenario with bipartite sources.

To compare with the Mermin value, fix the settings of the end nodes and define the vectors $\vect u_0$ and $\vect u_1$ with components $\Tr[\rho\,U_0\otimes\sigma_k]$ and $\Tr[\rho\,U_1\otimes\sigma_k]$. They play the role of $\vect\tau$ in the closed network. Let $\No(\rho)$ be the maximum of the left-hand side of \cref{eq:open} for two copies of $\rho$, when Charlie performs a Bell-state measurement in an arbitrary frame, $g_0$ and $g_1$ are two of its parities, and the end nodes measure projectively.

\begin{theorem}\label{thm:mermin}
For every three-qubit state $\rho$,
\begin{align}
\No(\rho)^2&=\max\left[\abs{\vect u_0}^2+\abs{\vect u_1}^2+2\abs{\vect u_0\times\vect u_1}\right], \label{eq:No}\\
\No(\rho)&\le\frac{\Me(\rho)}{2}, \label{eq:mermin}
\end{align}
where the maximum is over the settings of the end nodes. Equality holds in \cref{eq:mermin} if $\vect u_0\perp\vect u_1$ at the settings that maximize the Mermin value. Hence, if two copies of $\rho$ violate \cref{eq:open} with a Bell-state measurement at the central node, one copy of $\rho$ violates the Mermin inequality.
\end{theorem}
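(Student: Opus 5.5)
The plan is to reduce $\No$ to a maximization over one orthonormal pair of directions at Charlie, solve that in closed form as a nuclear norm, and then compare with the Mermin value, where Charlie's two directions need not be orthogonal.

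\emph{Factorization.} Fix Charlie's frame $\{\vect f_j\}$ and take $g_0=C_j\otimes C_j$ and $g_1=C_k\otimes C_k$, where $C_j=\vect f_j\cdot\vect\sigma$. Every operator in $I_{\rm o}$ is then a product of a factor on source 1 and a factor on source 2. Hence
\begin{align}
I_{\rm o}=(\vect u_0\cdot\vect f_j)(\vect u_0'\cdot\vect f_j),\qquad J_{\rm o}=(\vect u_1\cdot\vect f_k)(\vect u_1'\cdot\vect f_k).
\end{align}
Here the primed vectors are built from David's and Eve's settings on the second copy. By $\sqrt{\abs{xy}}\le\max(\abs x,\abs y)$, or equivalently by AM--GM applied to the pair, giving both sources the same end-node settings is optimal. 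This yields
\begin{align}
\sqrt{\abs{I_{\rm o}}}+\sqrt{\abs{J_{\rm o}}}\le\abs{\vect u_0\cdot\vect n}+\abs{\vect u_1\cdot\vect n'},
\end{align}
with $\vect n=\vect f_j$ and $\vect n'=\vect f_k$. The bound is attained with equal settings. If $j\neq k$, then $\vect n\perp\vect n'$. If $j=k$, then $\vect n=\vect n'$ and the value is at most $\abs{\vect u_0}+\abs{\vect u_1}$ only when the vectors are aligned. I will check that this case is dominated by the orthogonal-pair maximum computed below.

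\emph{Orthogonal-pair maximum.} Absorb the signs into $\vect n,\vect n'$. The quantity $\max_{\vect n\perp\vect n'}(\vect u_0\cdot\vect n+\vect u_1\cdot\vect n')$ equals $\max_R\Tr(R^TX)$, where $X=[\vect u_0\,\vect u_1]$ is a $3\times2$ matrix and $R$ ranges over $3\times2$ isometries. By von Neumann's trace inequality this is the nuclear norm $\sigma_1+\sigma_2$ of $X$. Then
\begin{align}
(\sigma_1+\sigma_2)^2=\sigma_1^2+\sigma_2^2+2\sigma_1\sigma_2=\abs{\vect u_0}^2+\abs{\vect u_1}^2+2\sqrt{\det X^TX},
\end{align}
and $\sqrt{\det X^TX}=\abs{\vect u_0\times\vect u_1}$. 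This gives \cref{eq:No} after maximizing over end-node settings. For the case $j=k$, the value $\abs{(\vect u_0\pm\vect u_1)\cdot\vect n}$ is at most $\norm{X}_{\rm op}=\sigma_1\le\sigma_1+\sigma_2$ up to the factor arising from the signs. I will verify that this is indeed no larger, using $\abs{\vect u_0\pm\vect u_1}\le\sqrt2\,\sigma_1\le\sigma_1+\sigma_2$ only if needed. Otherwise I would bound it directly by $\abs{\vect u_0}+\abs{\vect u_1}$ and pass straight to the Mermin comparison, which is all that \cref{eq:mermin} requires.

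\emph{Mermin comparison.} For one copy with the same end-node settings, $\avg M=2(\vect u_0\cdot\vect c_1+\vect u_1\cdot\vect c_0)$, where Charlie's two unit vectors are now arbitrary. The maximum over them is $2(\abs{\vect u_0}+\abs{\vect u_1})$. So it remains to show $\sigma_1+\sigma_2\le\abs{\vect u_0}+\abs{\vect u_1}$. This is the triangle inequality for the nuclear norm applied to $X=\vect u_0e_1^T+\vect u_1e_2^T$, since each rank-one term has nuclear norm equal to its column length. Equality holds when $\vect u_0\perp\vect u_1$, because then the columns are the singular directions. Maximizing over settings gives $\No\le\Me/2$, with equality when the Mermin-optimal settings give orthogonal $\vect u_0,\vect u_1$. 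The final claim then follows: $\No>1$ forces $\Me>2$.

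I expect the main obstacle to be the bookkeeping in the factorization step. One must allow different settings on the two copies, the degenerate choices $j=k$ or $g$ trivial, and sign conventions such as $\sigma_2\otimes\sigma_2=-(\sigma_1\otimes\sigma_1)(\sigma_3\otimes\sigma_3)$, and still reduce everything to $\abs{\vect u_0\cdot\vect n}+\abs{\vect u_1\cdot\vect n'}$. Once that reduction is in place, the nuclear-norm computation is routine.
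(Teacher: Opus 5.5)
Your main line is correct and is essentially the paper's argument. Appendix~\ref{app:mermin} optimizes Charlie's orthonormal pair with an explicit angle parametrization in the plane of $\vect u_0,\vect u_1$. It only remarks in passing that the optimum is the sum of the singular values of the matrix with rows $\vect u_0,\vect u_1$. You start from that characterization via von Neumann's trace inequality, which is more compact. Your nuclear-norm triangle inequality is the same comparison as the paper's $\abs{\vect u_0\times\vect u_1}\le\abs{\vect u_0}\abs{\vect u_1}$.

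Your reduction to equal end-node settings on the two copies fills in a step the paper passes over with ``as in Appendix~\ref{app:value}''. Note, though, that the termwise bound $\sqrt{\abs{xy}}\le\max(\abs x,\abs y)$ does not do this. The two maxima may come from different copies. What works is AM--GM on each of the two terms followed by summation, which bounds the value by the mean of the two single-setting values. The Cauchy--Schwarz step of Appendix~\ref{app:value} also works.

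The claim you should drop is that the case $j=k$ is dominated by the orthogonal-pair maximum. It is false. Your chain $\abs{\vect u_0\pm\vect u_1}\le\sqrt2\,\sigma_1\le\sigma_1+\sigma_2$ breaks whenever $\sigma_2<(\sqrt2-1)\sigma_1$, for instance for rank-one $X$. Here is a counterexample. Take $\ket{\Phi^+}\otimes\ket{0}$, with the Bell pair at the two end nodes and $\ket{0}$ at Charlie. Then $\vect u_0=\avg{U_0}\hat{\vect z}$ and $\vect u_1=\avg{U_1}\hat{\vect z}$. Since $U_0U_1=-U_1U_0$, Fact 1 of Appendix~\ref{app:parity} gives $\avg{U_0}^2+\avg{U_1}^2\le1$. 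So the right-hand side of \cref{eq:No} equals $1$, which is the value $\No=1$ the paper quotes for this state. Yet taking $g_0=g_1$ equal to Charlie's $\sigma_3\otimes\sigma_3$ parity (or trivial $g_0=g_1$) gives $\abs{\avg{U_0}}+\abs{\avg{U_1}}$. This is half a CHSH value and reaches $\sqrt2$.

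Hence \cref{eq:No} holds only when $g_0$ and $g_1$ are two distinct nontrivial parities, so that $\vect n\perp\vect n'$. This is how the paper reads the definition of $\No$, since its proof starts from two orthonormal directions of Charlie's frame. With that reading no degenerate case remains, and your argument proves the theorem. Your fallback of bounding the case $j=k$ by $\abs{\vect u_0}+\abs{\vect u_1}$ would rescue only \cref{eq:mermin}, not \cref{eq:No}. For a trivial $g$ even that fallback is unavailable, since $\Tr[\rho\,U_0\otimes\id]$ is not a component of $\vect u_0$.
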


The proof is in Appendix~\ref{app:mermin}. The reason for the reversal is geometric. In the Mermin test the third party can align its two settings freely, one with $\vect u_0$ and one with $\vect u_1$, and obtains $\abs{\vect u_0}+\abs{\vect u_1}$. The two parities of a Bell-state measurement belong to perpendicular directions. When $\vect u_0$ and $\vect u_1$ are not perpendicular, the central node cannot align with both, and it loses. In the closed network the restriction to perpendicular directions sat at the end nodes, which suffer from it in the Svetlichny test as well. Here it sits at the central node, which does not suffer from it in the Mermin test.

For the generalized GHZ states in the range where the Mermin inequality is violated, the optimal $\vect u_0$ and $\vect u_1$ are perpendicular, so nothing is lost and $\No=\Me/2=2\sin2\alpha$. For the noisy GHZ state $\No=\Me/2=2v$, so the threshold is $v>1/2$, the same as for the Mermin inequality. For the $W$ state the two vectors are not perpendicular and $\No\approx1.521$ is slightly below $\Me/2\approx1.523$. \Cref{fig:random}(b) shows the same behavior for random states.

The converse of \cref{thm:mermin} fails as well. The state of \cref{eq:bisep}, with its unentangled particle at the central node, has $\Me/2=\sqrt2$ but $\No=1$, because a central node holding an uncorrelated particle cannot herald anything. If instead the unentangled particle goes to an end node, $\No=\sqrt2$. As in the closed network, a violation does not require genuine tripartite entanglement.

Taken together, and for the inequalities and measurements considered here, the two wirings are bounded by the two single-copy inequalities from opposite sides,
\begin{align}
\Sv(\rho)>4\ &\Rightarrow\ \Nc(\rho)>1, \nonumber\\
\No(\rho)>1\ &\Rightarrow\ \Me(\rho)>2. \label{eq:hierarchy}
\end{align}
Neither implication can be reversed. Whether other notions of multipartite nonlocality~\cite{Bancal2013} or other network inequalities obey similar relations is left open.

\section{Many sources and the parity effect}
\label{sec:nlocal}

So far we have used two sources. In both networks the central nodes have no input, and this has a consequence that becomes visible once more sources are added. A central node must provide the signs for both correlators, $I$ and $J$, from the same measurement in every run. We now show that this is possible without loss for an even number of sources, but not for an odd number.

\textit{Intuition.} Consider the closed network with $n$ sources, as in \cref{fig:networks}(b), and qubit end nodes. The sum and the difference of the two observables of the $i$th end node are proportional to Pauli operators along perpendicular directions $\vect n_i$ and $\vect n_i'$, and two such operators anticommute. The correlator $I$ probes the product $P$ of the $n$ operators $\vect n_i\cdot\vect\sigma$, and $J$ probes the product $Q$ of the $n$ operators $\vect n_i'\cdot\vect\sigma$. Moving $Q$ past $P$ produces one minus sign per factor, so $PQ=(-1)^nQP$. For even $n$ the two products commute, and the central nodes can herald states in which both have definite values. For odd $n$ they anticommute, and no state of the end nodes can have large expectation values of both~\cite{Kurzynski2011}. A central node with a binary input could measure in a basis adapted to $P$ for one input and to $Q$ for the other, and the star inequality with such a node is violated up to $\sqrt2$ for every $n$~\cite{Tavakoli2014,Andreoli2017}. A central node without input has to compromise, as long as the inequality contains only these two correlators.

\begin{theorem}\label{thm:parity}
Consider $n$ sources that emit arbitrary states, and central nodes without input that perform arbitrary measurements.

(i) In the closed network with qubit end nodes and projective measurements,
\begin{align}
\abs I^{1/n}+\abs J^{1/n}\le
\begin{cases}
\sqrt2, & n\ \text{even},\\
2^{\frac12-\frac1{2n}}, & n\ \text{odd}.
\end{cases} \label{eq:parityc}
\end{align}

(ii) In the open network with projective dichotomic measurements of the end nodes in any dimension,
\begin{align}
\abs{I_{\rm o}}^{1/n}+\abs{J_{\rm o}}^{1/n}\le
\begin{cases}
2, & n\ \text{even},\\
2^{1-\frac1{2n}}, & n\ \text{odd}.
\end{cases} \label{eq:parityo}
\end{align}

All four bounds are attained with GHZ sources.
\end{theorem}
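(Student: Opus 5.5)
The plan is to reduce both bounds to a statement about two operators acting only on the end nodes, and then to optimize over a few scalar parameters.

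\textit{Closed network.} The central nodes perform a fixed POVM with outcomes $k$. Conditioning on $k$ leaves the end nodes in a state $\rho_k$ with probability $p_k$, and $f_0,f_1$ are fixed signs $f_0(k),f_1(k)$. Write $\vect b^{(i)}_0\pm\vect b^{(i)}_1=2\cos\theta_i\,\vect n_i$ and $2\sin\theta_i\,\vect n_i'$ with $\vect n_i\perp\vect n_i'$. Then
\[
I=\Big(\prod_i\cos\theta_i\Big)\sum_k p_k f_0(k)\Tr[\rho_kP],\qquad J=\Big(\prod_i\sin\theta_i\Big)\sum_k p_kf_1(k)\Tr[\rho_kQ],
\]
where $P=\bigotimes_i\vect n_i\cdot\vect\sigma$ and $Q=\bigotimes_i\vect n_i'\cdot\vect\sigma$. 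Set $a=\sum_kp_k\abs{\Tr\rho_kP}$ and $b=\sum_kp_k\abs{\Tr\rho_kQ}$. Then $a,b\le1$ always. For odd $n$ the operators $P$ and $Q$ anticommute, so $(\alpha P+\beta Q)^2=\alpha^2+\beta^2$ for real $\alpha,\beta$. This gives $\Tr[\rho P]^2+\Tr[\rho Q]^2\le1$ for every state, and by Jensen $a^2+b^2\le1$.

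Next I eliminate the angles with Hölder's inequality $\sum_{j}\prod_i x_{ij}^{1/n}\le\prod_i(\sum_j x_{ij})^{1/n}$, applied to $x_{i0}=a^{1/n}\cos\theta_i$ and $x_{i1}=b^{1/n}\sin\theta_i$. Since $a^{1/n}\cos\theta+b^{1/n}\sin\theta\le\sqrt{a^{2/n}+b^{2/n}}$, this yields
\[
\abs I^{1/n}+\abs J^{1/n}\le\sqrt{a^{2/n}+b^{2/n}}.
\]
For even $n$ the constraint $a,b\le1$ gives $\sqrt2$. For odd $n$ I maximize under $a^2+b^2\le1$. By concavity of $t\mapsto t^{1/n}$ in $t=a^2$, the maximum is at $a=b=2^{-1/2}$, giving $2^{\frac12-\frac1{2n}}$.

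\textit{Open network.} There are no angles to optimize, and the bound is $\abs{I_{\rm o}}^{1/n}+\abs{J_{\rm o}}^{1/n}\le a^{1/n}+b^{1/n}$. Here $a,b$ are defined as above but with $P=\bigotimes_iU_0^{(i)}$ and $Q=\bigotimes_iU_1^{(i)}$, where $U_0^{(i)},U_1^{(i)}$ are the operators of \cref{eq:kernel} for the two end nodes of source $i$. Since $\norm{U_j}\le1$, we have $a,b\le1$, and the even bound $2$ follows. For odd $n$ the target value $2\cdot2^{-1/(2n)}$ is exactly $a^{1/n}+b^{1/n}$ at $a=b=2^{-1/2}$. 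So the same constraint $\Tr[\rho P]^2+\Tr[\rho Q]^2\le1$ is what I need.

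This is the main obstacle, because $U_0,U_1$ are not Pauli operators and end nodes may have any dimension. I would first apply Jordan's lemma to each end node, splitting its two projective observables into blocks where they act as qubit observables at some relative angle. Within a block I would use the factorization
\[
U_0+iU_1=\tfrac12(A_0+iA_1)(B_0+iB_1).
\]
The idea is that $\bigotimes_i(U_0^{(i)}+iU_1^{(i)})$ expands into terms $i^{m}P_m$, where $P_m$ mixes $m$ factors $U_1$ and $n-m$ factors $U_0$. Comparing with $P\pm iQ$, and using $\abs{\Tr\rho X}\le\norm X$ for the operator $\cos\phi\,P+\sin\phi\,Q$, I expect the odd parity to force $\norm{\cos\phi\,P+\sin\phi\,Q}\le1$ for all $\phi$, which is equivalent to the constraint.

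If a clean operator identity fails, the fallback is to check the norm bound block by block. In each qubit block, $U_0$ and $U_1$ are explicit $4\times4$ matrices whose joint spectrum I would compute directly, and I would then argue multiplicativity over the tensor product. This is where I expect the real work to be.

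\textit{Attainment.} For even $n$ I take GHZ sources with GHZ-basis central measurements, as in \cref{prop:swap}, which herald both $P$ and $Q$ with value $\pm1$. For odd $n$ I take the same strategy but tilt the end-node directions by $\pi/4$ within the $x$–$z$ plane (respectively the $x$–$y$ plane for the open network). The heralded GHZ states then give $\abs{\Tr\rho_kP}=\abs{\Tr\rho_kQ}=2^{-1/2}$, and choosing $\theta_i=\pi/4$ makes the Hölder step tight.
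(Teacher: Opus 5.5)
\textbf{There is a genuine gap in the odd-$n$ open-network bound, and the odd-$n$ closed-network attainment fails as stated.} Your closed-network bound and the even-$n$ open bound are correct and follow the paper's proof step by step: conditioning on the central outcome, Jensen, the H\"older/AM--GM step, and concavity.

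\textbf{The odd-$n$ open bound.} You correctly reduce it to $\Tr[\rho P]^2+\Tr[\rho Q]^2\le1$ for $P=\bigotimes_iU_0^{(i)}$ and $Q=\bigotimes_iU_1^{(i)}$. But you leave this unproved, and neither route you sketch delivers it. Expanding $\bigotimes_i(U_0^{(i)}+iU_1^{(i)})$ produces all the mixed terms $P_m$ and gives no identity for $\cos\phi\,P+\sin\phi\,Q$. In the fallback, there is no ``multiplicativity over the tensor product'' for the norm of a sum of two tensor products. So a blockwise spectrum computation does not lift to $n$ sources.

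The missing idea is that $U_0$ and $U_1$ anticommute in any dimension, so Jordan's lemma is unnecessary. Using only $A_x^2=B_y^2=\id$ and $[A_x,B_y]=0$, one finds $4U_0U_1=B_0B_1+A_0A_1-A_1A_0-B_1B_0=-4U_1U_0$.

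Your own factorization gives the same result. Since $(A_0+iA_1)^2=i\{A_0,A_1\}$, and likewise for Bob, $(U_0+iU_1)^2=-\tfrac14\{A_0,A_1\}\{B_0,B_1\}$ is Hermitian. But $(U_0+iU_1)^2=U_0^2-U_1^2+i\{U_0,U_1\}$ with both $U_0^2-U_1^2$ and $\{U_0,U_1\}$ Hermitian, which forces $\{U_0,U_1\}=0$.

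Hence $PQ=(-1)^nQP$. Because $U_j^2\neq\id$ in general, your identity $(\alpha P+\beta Q)^2=\alpha^2+\beta^2$ must be replaced by $(\alpha P+\beta Q)^2=\alpha^2P^2+\beta^2Q^2\le(\alpha^2+\beta^2)\id$, which uses $\norm{U_j}\le1$. With this change the odd open bound follows exactly as in your closed case.

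\textbf{The odd-$n$ closed attainment.} Tilting inside the $x$--$z$ plane gives no violation at all. On the heralded state $(\ket{\vect k}+\epsilon\ket{\bar{\vect k}})/\sqrt2$, the only string of $\sigma_1$'s and $\sigma_3$'s in the expansion of $P$ with nonzero mean is $\sigma_1^{\otimes n}$, because $\avg{\sigma_3^{\otimes n}}=0$ for odd $n$. Your quantities then become $a=\prod_i\abs{\cos\phi_i}$ and $b=\prod_i\abs{\sin\phi_i}$, and AM--GM gives $a^{2/n}+b^{2/n}\le1$.

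You need the $x$--$y$ plane, as in the paper. There $\avg{\bigotimes_i(\cos\phi_i\sigma_1+\sin\phi_i\sigma_2)}_k=\epsilon\cos\Phi_k$ with $\Phi_k=\sum_i(-1)^{k_i}\phi_i$. For odd $n$ the perpendicular directions give $\pm\sin\Phi_k$. Rotating a single end node by $\pi/4$ then makes both moduli equal to $2^{-1/2}$ for every $k$.

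\textbf{The open-network tilt.} The same care is needed here. Rotating a single end node by $\pi/4$ works. Rotating both end nodes of a source by $\pi/4$ in the same sense multiplies $U_0+iU_1$ by $\pm i$, which only exchanges $U_0$ and $U_1$ up to sign and gains nothing.

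Your even-$n$ strategies with $n$-qubit GHZ-basis measurements are fine. They are a valid alternative to the paper's pairwise Bell-state measurements.
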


The proof is in Appendix~\ref{app:parity}. Four remarks clarify its scope.

First, the bounds are statements about the maximal quantum values of the particular expressions in \cref{eq:ineq,eq:open}. They do not limit network nonlocality in general.

Second, the proof uses neither the form of the source states nor their independence. Independence is what makes the local bound equal to $1$. The quantum upper bound only uses the algebra of the end-node observables, so it holds for any sources. For the same reason the effect is not caused by the sources being tripartite. It holds for any star whose central nodes have no input and whose inequality has the two-correlator form of \cref{eq:ineq}, and it does not disappear when several central nodes share the work. In particular, the trilocal inequalities of Ref.~\cite{Mukherjee2020} for three tripartite sources have this form, with particular choices of $f_0$ and $f_1$, so their quantum violation can never exceed $2^{1/3}$. The effect is a property of two-correlator inequalities and not of central nodes without input as such. Tavakoli \textit{et al.}~\cite{Tavakoli2014} constructed, for a central node that performs a single GHZ-basis measurement, a star inequality with $2^{n-1}$ correlators, one for each stabilizer of the GHZ state with an even number of $\sigma_2$ factors, and showed that it is violated by a factor $\sqrt2$ for every $n$. Using more correlators thus removes the parity effect. To our knowledge, the tight bounds of \cref{thm:parity} for two-correlator inequalities have not been stated before.

Third, part (i) assumes qubit end nodes with projective measurements, whereas part (ii) holds in any dimension. The difference comes from the closed network, where we use that the sum and the difference of two qubit observables are proportional to Pauli operators.

Fourth, for $n=1$ the closed bound equals the local bound $1$. A single source with central nodes that have no input cannot violate \cref{eq:ineq}. The open bound for $n=1$ is $\sqrt2$, since a single tripartite source with a fixed measurement at one party can still violate a Bell inequality.

For three sources the maximal closed-network value is $2^{1/3}\approx1.26$, below the value $\sqrt2\approx1.41$ for two sources. Adding a third source to a maximally violating pair lowers the best violation. The odd values approach the even ones only as $n$ grows, as shown in \cref{fig:parity}(a).

\begin{figure*}[t]
\centering
\includegraphics[width=0.94\textwidth]{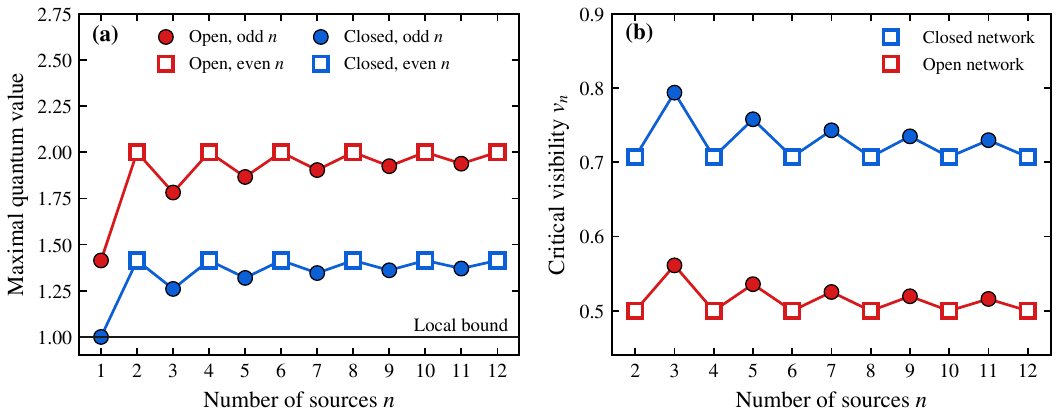}
\caption{\justifying\small Parity effect. (a) Maximal quantum values of the closed-network inequality (blue) and the open-network inequality (red) against the number of sources, from \cref{thm:parity}. Filled circles are odd $n$, open squares are even $n$, and the black line is the local bound. (b) Critical visibility of noisy GHZ sources for the strategies that attain these maxima, with the same colors and markers.}
\label{fig:parity}
\end{figure*}

\textit{Optimal strategies.} For even $n$ the sources are grouped in pairs, and each pair is treated as in \cref{sec:closed-examples,sec:open}. For odd $n$ one source is left over. In the closed network both central nodes measure $\sigma_1$ on its particles and multiply both $f_0$ and $f_1$ by the results, and its end node measures $\sigma_1$ and $\sigma_2$. For noisy GHZ sources these strategies give $\sqrt2\,v$ for even $n$ and $2^{1/2-1/(2n)}v$ for odd $n$ in the closed network, and $2v$ and $2^{1-1/(2n)}v$ in the open network. The resulting critical visibilities are plotted in \cref{fig:parity}(b).

\textit{Three sources and tripartite swapping.} For $n=3$ the natural measurement of each central node is the three-qubit GHZ basis, which by \cref{prop:swap} leaves the three end nodes in a GHZ state. This measurement also attains the bound of \cref{thm:parity}. It fixes the values of $\sigma_1\otimes\sigma_1\otimes\sigma_1$ and $\sigma_2\otimes\sigma_2\otimes\sigma_1$, which commute because they differ in two positions. Let $f_0$ be the product of the first operator at both central nodes, and $f_1$ the product of the second operator at Alice and the first at Charlie. With settings $(\sigma_1\pm\sigma_2)/\sqrt2$ for $B_1$ and $B_2$ and $\sigma_1,\sigma_2$ for $B_3$, one finds $I=J=1/4$ and the value $2^{1/3}$.

With noisy GHZ sources, the two central nodes find matching GHZ-basis outcomes with probability $(1+3v^2)/4$ and then leave the end nodes in a state that is diagonal in the GHZ basis. Its Mermin and Svetlichny values are $16v^3/(1+3v^2)$ and $16\sqrt2\,v^3/(1+3v^2)$, so it violates the Mermin inequality for $v>0.661$ and the Svetlichny inequality for $v>0.804$. The network strategy above violates \cref{eq:ineq} for $v>2^{-1/3}\approx0.794$. As for two sources, there is a window, $0.794<v\le0.804$, in which the network inequality is violated while the heralded three-qubit state does not violate the Svetlichny inequality.

\section{Summary and conclusions}
\label{sec:conclusions}

In this work, we studied how the nonlocality of a network whose sources emit three-qubit states is related to the Mermin and Svetlichny nonlocality of a single source. For the inequalities and measurements studied here, the answer depends on the wiring. In the closed network, where two central nodes without input share every source, the two-copy value with Bell-state measurements is bounded from below by the normalized Svetlichny value. Svetlichny nonlocality is thus sufficient for a violation, while all entangled generalized GHZ states, noisy $W$ states and even biseparable states show that it is not necessary. In the open network, where one central node is shared, the two-copy value with a Bell-state measurement is bounded from above by the normalized Mermin value, and GHZ sources reach the algebraic maximum.

Both bounds have a simple operational origin. A central node without input does not switch settings. In the closed network this is an advantage, because the central nodes read all parities of their Bell-state measurements in every run, whereas the parties of a Svetlichny test must switch. In the open network it is a limitation, because the two parities that the central node uses belong to perpendicular directions, whereas the third party of a Mermin test may choose any two directions. With many sources the same feature produces the parity effect. A central node without input cannot decide in each run which of two anticommuting observables to herald, and for an odd number of sources this lowers the maximal violation from $\sqrt2$ to $2^{1/2-1/(2n)}$ in the closed network and from $2$ to $2^{1-1/(2n)}$ in the open network.

For entanglement distribution the closed network has a further feature. With GHZ sources, the entanglement swapped to the end nodes is heralded only by the joint outcome of the two central nodes, while the outcome of one central node leaves the end nodes in a separable state. In the presence of noise, the network inequality is violated at visibilities where the heralded state does not violate the CHSH inequality for two sources, or the Svetlichny inequality for three. These statements concern the specific tests named here, and we do not claim that the network offers a practical advantage for a particular task.

Several directions remain open. The values $\Nc$ and $\No$ refer to Bell-state measurements. Our numerics indicate that product projective measurements at the central nodes do not help for GHZ sources but do help for $W$ sources, and a closed form for the optimal measurement of a general state would be valuable. It would also be interesting to know whether the parity bound of the closed network holds for end nodes of higher dimension, and whether inequalities with many correlators, like the one of Ref.~\cite{Tavakoli2014}, can be built for the closed network and related to the Mermin and Svetlichny values of a single source. Since biseparable sources already violate the inequalities used here, a further task is to find network tests with tripartite sources that certify the genuine multipartite character of the sources.

\appendix

\section{Local bounds: proofs of Lemma~\ref{lem:merge}, Theorem~\ref{thm:ineq} and Proposition~\ref{prop:open}} \label{app:ineq}

\textit{Proof of \cref{lem:merge}.} A model of the closed network is a model of the merged network in which the response of the merged node happens to factorize, $p(a,c|\vect\lambda)=p(a|\vect\lambda)\,p(c|\vect\lambda)$. So every closed-network model is a model of the merged network. For the converse, take a model of the merged network with response $p(a,c|\lambda_1,\dots,\lambda_n)$. Any such response can be produced by first drawing a random number $\mu$, independent of everything else, and then computing a deterministic function,
\begin{align}
(a,c)=F(\lambda_1,\dots,\lambda_n,\mu). \label{eq:Fdet}
\end{align}
Now let the first source emit the pair $\lambda_1'=(\lambda_1,\mu)$ instead of $\lambda_1$. The sources are still independent, because $\mu$ is independent of all other variables. Both central nodes receive $\lambda_1'$, so both know $\mu$. Alice outputs the first component of $F$, Charlie outputs the second, and the first end node ignores $\mu$. This closed-network model reproduces the model of the merged network exactly. \qed

\textit{Proof of \cref{thm:ineq}.} By \cref{lem:merge} we may use one central node. Let $\phi_j(\vect\lambda)\in[-1,1]$ be the mean value of $f_j$ for given $\vect\lambda=(\lambda_1,\dots,\lambda_n)$, and $\beta^{(i)}_y(\lambda_i)\in[-1,1]$ the mean outcome of the $i$th end node for input $y$.

\textit{Step 1.} The sums over inputs in \cref{eq:IJ} factorize. For example $\sum_{y,w}b_yd_w=(b_0+b_1)(d_0+d_1)$. Inserting the model, we get
\begin{align}
I=\int\!\dd\vect\lambda\,\prod_{i}q_i(\lambda_i)\ \phi_0(\vect\lambda)\prod_{i}\frac{\beta^{(i)}_0+\beta^{(i)}_1}{2}. \label{eq:Imodel}
\end{align}
\textit{Step 2.} Taking the absolute value inside the integral and using $\abs{\phi_0}\le1$ removes the central node. What remains is a product of functions of one variable each, and since the density is a product, the integral factorizes,
\begin{align}
\abs I&\le\prod_{i=1}^n x_i, \nonumber\\
x_i&=\int\!\dd\lambda_i\,q_i(\lambda_i)\,\frac{\abs{\beta^{(i)}_0+\beta^{(i)}_1}}{2}. \label{eq:Ibound}
\end{align}
This is the only place where the independence of the sources is used. In the same way $\abs J\le\prod_iy_i$, where $y_i$ contains $\abs{\beta^{(i)}_0-\beta^{(i)}_1}/2$.

\textit{Step 3.} For real numbers $s,t\in[-1,1]$,
\begin{align}
\tfrac12\abs{s+t}+\tfrac12\abs{s-t}=\max(\abs s,\abs t)\le1. \label{eq:maxid}
\end{align}
Applied to $s=\beta^{(i)}_0$ and $t=\beta^{(i)}_1$ and integrated, it gives $x_i+y_i\le1$.

\textit{Step 4.} For nonnegative numbers the arithmetic mean is at least the geometric mean. Applied to the $n$ numbers $x_i/(x_i+y_i)$ and to the $n$ numbers $y_i/(x_i+y_i)$, and added, it gives
\begin{align}
\prod_i\left(\frac{x_i}{x_i+y_i}\right)^{\!1/n}+\prod_i\left(\frac{y_i}{x_i+y_i}\right)^{\!1/n}&\le\frac1n\sum_i\frac{x_i+y_i}{x_i+y_i}\nonumber\\
&=1. \label{eq:amgm}
\end{align}
Multiplying by $\prod_i(x_i+y_i)^{1/n}$ and using Step 3,
\begin{align}
\prod_ix_i^{1/n}+\prod_iy_i^{1/n}&\le\prod_i\left(x_i+y_i\right)^{1/n}\le1. \label{eq:lemmaxy}
\end{align}
Together with Step 2 this proves \cref{eq:ineq}. \qed

\textit{Proof of \cref{prop:open}.} The proof above used only one property of the end nodes, namely Step 3. For the open network we therefore need $\abs{u_0}+\abs{u_1}\le1$ for the mean values of the kernels \eqref{eq:kernel} at fixed $\lambda_i$. Given $\lambda_i$, Alice and Bob respond independently with means $\alpha_x,\beta_y\in[-1,1]$, so
\begin{align}
u_0&=\tfrac12\left(\alpha_0\beta_0-\alpha_1\beta_1\right), \nonumber\\
u_1&=\tfrac12\left(\alpha_0\beta_1+\alpha_1\beta_0\right). \label{eq:u0u1}
\end{align}
Each of $u_0$ and $u_1$ is linear in each of the four means. Hence $\abs{u_0}+\abs{u_1}$ is convex in each mean separately and is largest when every mean equals $\pm1$. For such values,
\begin{align}
(\alpha_0\beta_0)(\alpha_1\beta_1)=(\alpha_0\beta_1)(\alpha_1\beta_0). \label{eq:signs}
\end{align}
If this common sign is $+1$, then $\alpha_0\beta_0=\alpha_1\beta_1$ and $\alpha_0\beta_1=\alpha_1\beta_0$, so $u_0=0$ and $\abs{u_1}=1$. If it is $-1$, then $\abs{u_0}=1$ and $u_1=0$. In both cases $\abs{u_0}+\abs{u_1}=1$. Steps 2 and 4 then give \cref{eq:open} and its version with $n$th roots. Since $U_0$ and $U_1$ have operator norm at most one, the left-hand side of \cref{eq:open} cannot exceed $2$. \qed

\section{Proof of Proposition~\ref{prop:value}} \label{app:value}

\textit{Step 1: the end nodes.} Let Bob measure $B_y=\vect b_y\cdot\vect\sigma$. If the angle between the unit vectors $\vect b_0$ and $\vect b_1$ is $2\theta_1$, then $\abs{\vect b_0\pm\vect b_1}^2=2\pm2\cos2\theta_1$ and $(\vect b_0+\vect b_1)\cdot(\vect b_0-\vect b_1)=0$. Hence
\begin{align}
\vect b_0+\vect b_1&=2\cos\theta_1\,\vect n_1, \nonumber\\
\vect b_0-\vect b_1&=2\sin\theta_1\,\vect n_1', \label{eq:bn}
\end{align}
with $\theta_1\in[0,\pi/2]$ and orthonormal $\vect n_1,\vect n_1'$. The same holds for David with $\theta_2,\vect n_2,\vect n_2'$.

\textit{Step 2: factorization.} With $f_0=(A_\mu\otimes A_\mu)(C_\nu\otimes C_\nu)$, the correlator $I$ is the mean value of
\begin{align}
\left(A_\mu\otimes\tfrac{B_0+B_1}{2}\otimes C_\nu\right)_{\!1}\otimes\left(A_\mu\otimes\tfrac{D_0+D_1}{2}\otimes C_\nu\right)_{\!2}, \label{eq:Iop}
\end{align}
where the subscripts denote the particles of the two sources. For the state $\rho_1\otimes\rho_2$ its mean value is the product of the two factors. Using \cref{eq:bn} and the definition \eqref{eq:tau},
\begin{align}
I&=\cos\theta_1\cos\theta_2\,(\vect\tau_1\cdot\vect n_1)(\vect\tau_2\cdot\vect n_2), \nonumber\\
J&=\sin\theta_1\sin\theta_2\,(\vect\tau_1'\cdot\vect n_1')(\vect\tau_2'\cdot\vect n_2'), \label{eq:IJq}
\end{align}
where $\vect\tau_i$ and $\vect\tau_i'$ are the vectors $\vect\tau_{\mu\nu}$ and $\vect\tau_{\mu'\nu'}$ of source $i$.

\textit{Step 3: the angles.} Let $p$ and $q$ be the absolute values of the two products of scalar products in \cref{eq:IJq}. For nonnegative numbers $\sqrt{ab}+\sqrt{cd}\le\sqrt{a+c}\,\sqrt{b+d}$, which follows by squaring both sides. Hence
\begin{align}
\sqrt{\abs I}+\sqrt{\abs J}&\le\sqrt{\cos\theta_1\cos\theta_2+\sin\theta_1\sin\theta_2}\,\sqrt{p+q}\nonumber\\
&=\sqrt{\cos(\theta_1-\theta_2)}\,\sqrt{p+q}\nonumber\\
&\le\sqrt{p+q}, \label{eq:cs}
\end{align}
with equality for $\theta_1=\theta_2=\theta$ and $\tan^2\theta=q/p$.

\textit{Step 4: identical sources.} For $\rho_1=\rho_2=\rho$ the same inequality gives $p+q\le\sqrt{X_1}\sqrt{X_2}$, where $X_i=(\vect\tau\cdot\vect n_i)^2+(\vect\tau'\cdot\vect n_i')^2$. So it is enough to take $\vect n_1=\vect n_2=\vect n$ and $\vect n_1'=\vect n_2'=\vect n'$, and we must maximize
\begin{align}
X=(\vect\tau\cdot\vect n)^2+(\vect\tau'\cdot\vect n')^2,\qquad \vect n\perp\vect n'. \label{eq:Xmax}
\end{align}

\textit{Step 5: the best $\vect n'$.} For fixed $\vect n$, the vector $\vect n'$ lies in the plane perpendicular to $\vect n$. The overlap with $\vect\tau'$ is largest when $\vect n'$ points along the part of $\vect\tau'$ in that plane, whose squared length is $\abs{\vect\tau'}^2-(\vect\tau'\cdot\vect n)^2$. Therefore
\begin{align}
\max_{\vect n'\perp\vect n}X&=\abs{\vect\tau'}^2+(\vect\tau\cdot\vect n)^2-(\vect\tau'\cdot\vect n)^2\nonumber\\
&=\abs{\vect\tau'}^2+\vect n^{T}G\,\vect n, \label{eq:lmax}
\end{align}
with the symmetric matrix $G=\vect\tau\vect\tau^{T}-\vect\tau'\vect\tau'^{T}$.

\textit{Step 6: the best $\vect n$.} The maximum of $\vect n^TG\vect n$ over unit vectors is the largest eigenvalue $\lambda_{\max}$ of $G$. The matrix $G$ maps every vector into the plane spanned by $\vect\tau$ and $\vect\tau'$, so its third eigenvalue is zero. Within the plane,
\begin{align}
G\vect\tau&=\abs{\vect\tau}^2\vect\tau-g\,\vect\tau', \nonumber\\
G\vect\tau'&=g\,\vect\tau-\abs{\vect\tau'}^2\vect\tau', \label{eq:Gact}
\end{align}
with $g=\vect\tau\cdot\vect\tau'$. The two nonzero eigenvalues therefore have sum $\abs{\vect\tau}^2-\abs{\vect\tau'}^2$ and product $g^2-\abs{\vect\tau}^2\abs{\vect\tau'}^2$. Solving the quadratic equation gives
\begin{align}
\lambda_{\max}=\frac12\left[\abs{\vect\tau}^2-\abs{\vect\tau'}^2+\sqrt{s^2-4g^2}\right], \label{eq:lmax2}
\end{align}
with $s=\abs{\vect\tau}^2+\abs{\vect\tau'}^2$. Adding $\abs{\vect\tau'}^2$ gives $\Lambda$ of \cref{eq:Lambda}, and maximizing over the frames and indices gives \cref{eq:value}.

Finally, $\vect\tau_{\mu\nu}\cdot\vect n$ is the mean value of an observable with eigenvalues $\pm1$, namely $A_\mu\otimes\vect n\cdot\vect\sigma\otimes C_\nu$. Hence $\abs{\vect\tau_{\mu\nu}}\le1$. \qed

\section{Proof of Theorem~\ref{thm:svet}} \label{app:svet}

Let the three parties measure along $\vect a_x$, $\vect b_y$, $\vect c_z$, where $\vect b_y$ refers to the particle that goes to the end node in the network.

\textit{Step 1: sort by the end node.} Collecting the terms of \cref{eq:M,eq:Mp} that contain $B_0$ and those that contain $B_1$ gives
\begin{align}
M+M'&=(\mathsf P+\mathsf Q)B_0+(\mathsf P-\mathsf Q)B_1\nonumber\\
&=\mathsf P\,(B_0+B_1)+\mathsf Q\,(B_0-B_1), \label{eq:sort}
\end{align}
with $\mathsf P=A_0C_1+A_1C_0$ and $\mathsf Q=A_1C_1-A_0C_0$.

\textit{Step 2: decompose the settings.} We write the settings of Alice and Charlie as
\begin{align}
\vect a_{0,1}&=\cos\alpha\,\vect e\pm\sin\alpha\,\vect e', \nonumber\\
\vect c_{0,1}&=\cos\gamma\,\vect f\pm\sin\gamma\,\vect f', \label{eq:ac}
\end{align}
with $\vect e\perp\vect e'$ and $\vect f\perp\vect f'$. This is \cref{eq:bn} read backwards. For Bob we use \cref{eq:bn} with angle $\theta$ and directions $\vect n\perp\vect n'$.

\textit{Step 3: expand.} Inserting \cref{eq:ac} into $\mathsf P$ and $\mathsf Q$, the mixed terms cancel and we obtain
\begin{align}
\mathsf P&=2\cos\alpha\cos\gamma\ \vect e\otimes\vect f-2\sin\alpha\sin\gamma\ \vect e'\otimes\vect f', \nonumber\\
\mathsf Q&=-2\cos\alpha\sin\gamma\ \vect e\otimes\vect f'-2\sin\alpha\cos\gamma\ \vect e'\otimes\vect f, \label{eq:PQ}
\end{align}
where $\vect e\otimes\vect f$ stands for the observable $\vect e\cdot\vect\sigma\otimes\vect f\cdot\vect\sigma$ of Alice and Charlie.

\textit{Step 4: four correlators.} We name the correlators that appear,
\begin{align}
X&=T(\vect e,\vect n,\vect f), & Y&=T(\vect e',\vect n,\vect f'), \nonumber\\
Z&=T(\vect e,\vect n',\vect f'), & W&=T(\vect e',\vect n',\vect f). \label{eq:xyzw}
\end{align}
Taking the mean value of \cref{eq:sort} with \cref{eq:bn,eq:PQ} gives
\begin{align}
\frac{\avg{M+M'}}{4}=\ &\cos\theta\left(\cos\alpha\cos\gamma\,X-\sin\alpha\sin\gamma\,Y\right)\nonumber\\
&-\sin\theta\left(\cos\alpha\sin\gamma\,Z+\sin\alpha\cos\gamma\,W\right). \label{eq:svexp}
\end{align}

\textit{Step 5: bound.} The two coefficients in the first bracket satisfy $\abs{\cos\alpha\cos\gamma}+\abs{\sin\alpha\sin\gamma}\le1$, by the Cauchy-Schwarz inequality. So the first bracket is at most $\max(\abs X,\abs Y)$ in absolute value, and likewise the second bracket is at most $\max(\abs Z,\abs W)$. Using $\abs{\cos\theta\,u+\sin\theta\,w}\le\sqrt{u^2+w^2}$ and maximizing over all settings,
\begin{align}
\frac{\Sv(\rho)^2}{16}\le\max\left(X^2,Y^2\right)+\max\left(Z^2,W^2\right), \label{eq:svbound}
\end{align}
where $X,Y,Z,W$ are evaluated at the optimal settings.

\textit{Step 6: the network reaches the bound.} The right-hand side of \cref{eq:svbound} is one of $X^2+Z^2$, $X^2+W^2$, $Y^2+Z^2$ and $Y^2+W^2$. Let Alice use a frame that contains $\vect e$ and $\vect e'$, and Charlie a frame that contains $\vect f$ and $\vect f'$. Consider the case $X^2+Z^2$. For $I$ we use Alice's parity along $\vect e$ and Charlie's parity along $\vect f$, so that $\vect\tau\cdot\vect n=X$. For $J$ we use Alice's parity along $\vect e$ and Charlie's parity along $\vect f'$, so that $\vect\tau'\cdot\vect n'=Z$. By \cref{eq:Lambda}, $\Lambda(\vect\tau,\vect\tau')\ge X^2+Z^2$. The other three cases are identical, with the pairs of parities $(\vect e,\vect f)$ and $(\vect e',\vect f)$ for $X^2+W^2$, $(\vect e',\vect f')$ and $(\vect e,\vect f')$ for $Y^2+Z^2$, and $(\vect e',\vect f')$ and $(\vect e',\vect f)$ for $Y^2+W^2$. All these parities come from one Bell-state measurement per central node. Hence $\Nc(\rho)^2\ge\Sv(\rho)^2/16$. \qed

\section{Proof of Theorem~\ref{thm:mermin}} \label{app:mermin}

\textit{Step 1: network value.} Fix the settings of the end nodes. Let Charlie use the parities of two orthonormal directions $\vect c$ and $\vect c'$ of his frame for $g_0$ and $g_1$. As in Appendix~\ref{app:value}, two copies of $\rho$ give
\begin{align}
I_{\rm o}=(\vect u_0\cdot\vect c)^2,\qquad J_{\rm o}=(\vect u_1\cdot\vect c')^2, \label{eq:IJou}
\end{align}
so that
\begin{align}
\sqrt{\abs{I_{\rm o}}}+\sqrt{\abs{J_{\rm o}}}=\abs{\vect u_0\cdot\vect c}+\abs{\vect u_1\cdot\vect c'}. \label{eq:openval}
\end{align}

\textit{Step 2: best frame.} Only the components of $\vect c$ and $\vect c'$ in the plane of $\vect u_0$ and $\vect u_1$ matter, and it is best to take both vectors in this plane. Let $\vect c$ make the angle $\varphi$ with a fixed axis in the plane and $\vect c'=\pm\,\vect c_\perp$, where $\vect c_\perp$ is $\vect c$ rotated by $\pi/2$. Removing the absolute values in \cref{eq:openval} by a choice of signs, we must maximize
\begin{align}
\vect u_0\cdot\vect c\pm\vect u_1\cdot\vect c_\perp=a_\pm\cos\varphi+b_\pm\sin\varphi \label{eq:abphi}
\end{align}
over $\varphi$, where $a_\pm,b_\pm$ are fixed numbers. The maximum is $\sqrt{a_\pm^2+b_\pm^2}$, and a short calculation in components gives
\begin{align}
a_\pm^2+b_\pm^2=\abs{\vect u_0}^2+\abs{\vect u_1}^2\pm2\,\hat{\vect k}\cdot(\vect u_0\times\vect u_1), \label{eq:abpm}
\end{align}
with $\hat{\vect k}$ the unit normal of the plane. The better sign turns the last term into $2\abs{\vect u_0\times\vect u_1}$, which proves \cref{eq:No}. (Equivalently, the maximum is the sum of the two singular values of the $2\times3$ matrix with rows $\vect u_0$ and $\vect u_1$.)

\textit{Step 3: Mermin value.} Since $M=2(U_0C_1+U_1C_0)$, one copy gives $\avg M=2(\vect u_0\cdot\vect c_1+\vect u_1\cdot\vect c_0)$. The third party may choose $\vect c_1$ along $\vect u_0$ and $\vect c_0$ along $\vect u_1$, so that
\begin{align}
\frac{\Me(\rho)}{2}=\max\left(\abs{\vect u_0}+\abs{\vect u_1}\right). \label{eq:Mval}
\end{align}

\textit{Step 4: compare.} Since $\abs{\vect u_0\times\vect u_1}\le\abs{\vect u_0}\abs{\vect u_1}$, the bracket in \cref{eq:No} is at most $(\abs{\vect u_0}+\abs{\vect u_1})^2$ for the same settings. This proves \cref{eq:mermin}. If $\vect u_0\perp\vect u_1$ at the settings that maximize \cref{eq:Mval}, the two brackets coincide there and equality holds. For the generalized GHZ state and settings $\sigma_1,\sigma_2$ one finds $\vect u_0=\sin2\alpha\,\hat{\vect x}$ and $\vect u_1=-\sin2\alpha\,\hat{\vect y}$. \qed

\section{Proof of Theorem~\ref{thm:parity}} \label{app:parity}

We use two elementary facts.

\textit{Fact 1.} If $P$ and $Q$ are Hermitian, anticommute and have operator norm at most one, then every state satisfies
\begin{align}
\avg P^2+\avg Q^2\le1. \label{eq:anti}
\end{align}
To see this, take real numbers $x,y$ with $x^2+y^2=1$. Because $PQ+QP=0$, the cross terms cancel in
\begin{align}
(xP+yQ)^2=x^2P^2+y^2Q^2\le\id. \label{eq:antisq}
\end{align}
So $xP+yQ$ has norm at most one and $\abs{x\avg P+y\avg Q}\le1$. Choosing $(x,y)$ parallel to $(\avg P,\avg Q)$ gives \cref{eq:anti}.

\textit{Fact 2.} If $p,q\ge0$, $p^2+q^2\le1$ and $0<r\le1$, then
\begin{align}
p^{r}+q^{r}\le2\left(\frac{p^2+q^2}{2}\right)^{\!r/2}\le2^{1-r/2}. \label{eq:conc}
\end{align}
The first inequality holds because $t\mapsto t^{r/2}$ is concave, so the mean of its values at $p^2$ and $q^2$ is at most its value at the mean.

\textit{Step 1: removing the sign functions.} Collect all outcomes of the central nodes in one label $k$, with probability $\pi_k$. Let $\avg{\,\cdot\,}_k$ be the mean value in the state of the end nodes conditioned on $k$. For any operator $P$ of the end nodes and any sign function $f$ of $k$,
\begin{align}
\abs{\avg{fP}}=\Big|\sum_k\pi_kf(k)\avg P_k\Big|\le\sum_k\pi_k\abs{\avg P_k}\equiv\bar p. \label{eq:pbar}
\end{align}
We define $\bar q$ in the same way for an operator $Q$. If $\norm{P},\norm{Q}\le1$, then $\bar p,\bar q\le1$. If in addition $P$ and $Q$ anticommute, Fact 1 holds for every $k$. Since the square of a mean is at most the mean of the squares,
\begin{align}
\bar p^{\,2}+\bar q^{\,2}&\le\sum_k\pi_k\left(\avg P_k^2+\avg Q_k^2\right)\le1. \label{eq:pq}
\end{align}

\textit{Step 2: closed network.} For qubit end nodes, \cref{eq:bn} gives
\begin{align}
I&=\prod_i\cos\theta_i\,\avg{f_0P}, \nonumber\\
J&=\prod_i\sin\theta_i\,\avg{f_1Q}, \label{eq:IJPQ}
\end{align}
where $P$ is the tensor product of the operators $\vect n_i\cdot\vect\sigma$ and $Q$ that of $\vect n_i'\cdot\vect\sigma$. By \cref{eq:pbar}, $\abs I\le\prod_i\cos\theta_i\,\bar p$ and $\abs J\le\prod_i\sin\theta_i\,\bar q$. We apply the first inequality of \cref{eq:lemmaxy} with $x_i=\cos\theta_i\,\bar p^{1/n}$ and $y_i=\sin\theta_i\,\bar q^{1/n}$. Each factor then satisfies $\cos\theta_i\,\bar p^{1/n}+\sin\theta_i\,\bar q^{1/n}\le(\bar p^{2/n}+\bar q^{2/n})^{1/2}$, so
\begin{align}
\abs I^{1/n}+\abs J^{1/n}&\le\sqrt{\bar p^{\,2/n}+\bar q^{\,2/n}}. \label{eq:closedbound}
\end{align}
Since $\bar p,\bar q\le1$, this is at most $\sqrt2$. For odd $n$, each of the $n$ factors of $P$ anticommutes with the corresponding factor of $Q$, so $PQ=-QP$. Then \cref{eq:pq} and Fact 2 with $r=2/n$ give $\bar p^{\,2/n}+\bar q^{\,2/n}\le2^{1-1/n}$, and \cref{eq:closedbound} becomes $2^{1/2-1/(2n)}$.

\textit{Step 3: open network.} Now $P$ and $Q$ are the tensor products of the kernels $U_0$ and $U_1$ of the $n$ sources. Using only $A_x^2=B_y^2=\id$ and the fact that Alice's and Bob's operators commute,
\begin{align}
4\,U_0U_1&=B_0B_1+A_0A_1-A_1A_0-B_1B_0, \nonumber\\
4\,U_1U_0&=B_1B_0-A_0A_1+A_1A_0-B_0B_1, \label{eq:antiU}
\end{align}
so $U_0U_1+U_1U_0=0$ in any dimension. The kernels have norm at most one, which gives the bound $2$. For odd $n$, $P$ and $Q$ anticommute, and \cref{eq:pbar,eq:pq} together with Fact 2 for $r=1/n$ give
\begin{align}
\abs{I_{\rm o}}^{1/n}+\abs{J_{\rm o}}^{1/n}\le\bar p^{1/n}+\bar q^{1/n}\le2^{1-\frac1{2n}}. \label{eq:openbound}
\end{align}

\textit{Step 4: tightness.} Let all sources emit GHZ states. For even $n$, group the sources in pairs. In the closed network both central nodes perform Bell-state measurements on every pair. We take $f_0$ as the product of Alice's $\sigma_3\otimes\sigma_3$ parities and $f_1$ as the product of the $\sigma_1\otimes\sigma_1$ parities of both central nodes, and all end nodes measure $(\sigma_3\pm\sigma_1)/\sqrt2$. Then $\abs I=\abs J=2^{-n/2}$ and the value is $\sqrt2$. In the open network Charlie performs Bell-state measurements on the pairs, $g_0$ and $g_1$ are the products of the $\sigma_1\otimes\sigma_1$ and $\sigma_2\otimes\sigma_2$ parities, and $I_{\rm o}=J_{\rm o}=1$.

For odd $n$ one source is left over. In the closed network both central nodes measure $\sigma_1$ on its particles, and both $f_0$ and $f_1$ are multiplied by the two results. Its end node measures $\sigma_1$ and $\sigma_2$, so that $\theta=\pi/4$, $\vect n=(\hat{\vect x}+\hat{\vect y})/\sqrt2$ and $\vect n'=(\hat{\vect x}-\hat{\vect y})/\sqrt2$. With $\vect\tau_{11}=\hat{\vect x}$ this source contributes $\cos\theta\,\abs{\vect\tau_{11}\cdot\vect n}=1/2$ to $\abs I$ and $\sin\theta\,\abs{\vect\tau_{11}\cdot\vect n'}=1/2$ to $\abs J$. For the paired sources the end nodes measure $(\sigma_1\pm\sigma_2)/\sqrt2$, $f_0$ uses the $\sigma_1\otimes\sigma_1$ parities of both central nodes, and $f_1$ uses Alice's $\sigma_2\otimes\sigma_2$ parity and Charlie's $\sigma_1\otimes\sigma_1$ parity. Each paired source contributes $1/\sqrt2$. In total
\begin{align}
\abs I=\abs J=2^{-(n-1)/2}\cdot\tfrac12=2^{-(n+1)/2}, \label{eq:oddIJ}
\end{align}
and the value is $2\cdot2^{-(n+1)/(2n)}=2^{1/2-1/(2n)}$. In the open network Charlie measures the left-over qubit along $(\hat{\vect x}+\hat{\vect y})/\sqrt2$ and multiplies $g_0$ and $g_1$ by the result. This gives $\abs{I_{\rm o}}=\abs{J_{\rm o}}=1/\sqrt2$ and the value $2^{1-1/(2n)}$. \qed

\section{Entanglement swapping} \label{app:swap}

\textit{Proof of \cref{prop:swap}.} We first give the short argument for two sources. The GHZ state is stabilized by $\sigma_3\otimes\sigma_3\otimes\id$, $\id\otimes\sigma_3\otimes\sigma_3$ and $\sigma_1\otimes\sigma_1\otimes\sigma_1$, with the particle of the end node in the middle. Multiplying the stabilizers of the two sources gives, with certainty,
\begin{align}
\sigma_3^B\sigma_3^D&=\sigma_3^{A_1}\sigma_3^{A_2}=a_z, \nonumber\\
\sigma_3^B\sigma_3^D&=\sigma_3^{C_1}\sigma_3^{C_2}=c_z, \nonumber\\
\sigma_1^B\sigma_1^D&=\big(\sigma_1^{A_1}\sigma_1^{A_2}\big)\big(\sigma_1^{C_1}\sigma_1^{C_2}\big)=a_xc_x. \label{eq:stabswap}
\end{align}
The first two lines give $c_z=a_z$, and together with the third they fix one Bell state of Bob and David, \cref{eq:swap}.

For $n$ sources it is easier to work with the state itself,
\begin{align}
\ket{\GHZ}^{\otimes n}=2^{-n/2}\sum_{\vect s\in\{0,1\}^n}\ket{\vect s}_A\ket{\vect s}_B\ket{\vect s}_C, \label{eq:ghzn}
\end{align}
where $A$ and $C$ collect the qubits of the central nodes and $B$ those of the end nodes. The GHZ basis of $n$ qubits consists of the states $(\ket{\vect k}\pm\ket{\bar{\vect k}})/\sqrt2$, where $\bar{\vect k}$ is the complement of the bit string $\vect k$. If Alice finds $\vect k$ with sign $\epsilon_A$, only the terms $\vect s=\vect k$ and $\vect s=\bar{\vect k}$ survive, and the remaining state is $(\ket{\vect k}_B\ket{\vect k}_C+\epsilon_A\ket{\bar{\vect k}}_B\ket{\bar{\vect k}}_C)/\sqrt2$. Charlie can then only find the same pair $\{\vect k,\bar{\vect k}\}$. With his sign $\epsilon_C$, the end nodes are left in
\begin{align}
\frac{1}{\sqrt2}\left(\ket{\vect k}+\epsilon_A\epsilon_C\ket{\bar{\vect k}}\right). \label{eq:swapped}
\end{align}
Without Charlie's outcome, $\epsilon_C$ is unknown and equally likely to be $\pm1$. The average of \cref{eq:swapped} over $\epsilon_C$ is the separable mixture of $\proj{\vect k}$ and $\proj{\bar{\vect k}}$, because the cross terms cancel. The same holds with the roles of the central nodes exchanged. \qed

\textit{Noisy GHZ sources.} The Bell projectors can be written as
\begin{align}
\Phi^\pm&=\tfrac14\left(\id\pm\sigma_1\sigma_1\mp\sigma_2\sigma_2+\sigma_3\sigma_3\right), \nonumber\\
\Psi^\pm&=\tfrac14\left(\id\pm\sigma_1\sigma_1\pm\sigma_2\sigma_2-\sigma_3\sigma_3\right). \label{eq:bellproj}
\end{align}
Suppose both central nodes find $\Phi^+$. Expanding both projectors gives terms with Pauli index $\mu$ on Alice's two qubits and $\nu$ on Charlie's two qubits. Each term factorizes into one factor per source. The probability of the outcome is
\begin{align}
\pi=\tfrac1{16}\left(1+T_{303}^2\right)=\tfrac1{16}\left(1+v^2\right), \label{eq:piswap}
\end{align}
because the only nonzero entries of the form $T_{\mu0\nu}$ are $T_{000}=1$ and $T_{303}=v$. The correlators of Bob and David conditioned on this outcome are
\begin{align}
\avg{\sigma_3\sigma_3}&=\frac{T_{330}^2+T_{033}^2}{16\pi}=\frac{2v^2}{1+v^2}, \nonumber\\
\avg{\sigma_1\sigma_1}&=\frac{T_{111}^2+T_{212}^2}{16\pi}=\frac{2v^2}{1+v^2}, \nonumber\\
\avg{\sigma_2\sigma_2}&=-\frac{T_{122}^2+T_{221}^2}{16\pi}=-\frac{2v^2}{1+v^2}. \label{eq:corrswap}
\end{align}
The minus sign in the last line comes from the sign of $\sigma_2\sigma_2$ in $\Phi^+$. All other correlators and the local Bloch vectors vanish. This is the Werner state of \cref{eq:w}. The eight outcomes with $c_z=a_z$ are related to this one by local Pauli operations and have total probability $(1+v^2)/2$. For the outcomes with $c_z\neq a_z$, the projector at one central node is a $\Psi$ projector, the two terms in each numerator of \cref{eq:corrswap} enter with opposite signs and cancel, and Bob and David are left maximally mixed. For three sources, a direct calculation gives the heralding probability $(1+3v^2)/4$ and a state diagonal in the GHZ basis with coherence $\bra{000}\rho\ket{111}=2v^3/(1+3v^2)$. Its Mermin and Svetlichny values, evaluated numerically, are $8$ and $8\sqrt2$ times this coherence, which gives the expressions quoted in \cref{sec:nlocal}.

\section{Numerical methods} \label{app:numerics}

The proofs above do not rely on numerics. The computations described here serve two purposes. They are consistency checks of the analytical results, and they provide the values marked as numerical in \cref{tab:values}. The code is available from the authors upon reasonable request.

\textit{Parametrization.} Unit vectors are parametrized by two spherical angles. The frames of the central nodes are rotations parametrized by a rotation vector in $\mathbb R^3$, which covers all rotations. Redundant parametrizations, for example rotations that only permute the axes, are harmless because \cref{eq:value} already maximizes over all assignments of parities. General projective measurements of a central node on two qubits are given by a unitary $U=\exp(iH)$ whose columns form the measurement basis, with $H$ a Hermitian matrix of $16$ real parameters. For such measurements the best sign functions are chosen outcome by outcome, as in \cref{eq:pbar}.

\textit{Optimization.} $\Nc$ is maximized with the Nelder-Mead method over the six parameters of the two frames, using $40$ random starting points and tolerances of $10^{-7}$ in the parameters and $10^{-10}$ in the value. $\Sv$, $\Me$ and $\No$ are maximized with the BFGS method over the end-node angles, using $60$, $60$ and $40$ random starting points. For $\No$ the frame of the central node is optimized analytically by \cref{eq:No}. $\Nc^{\rm prod}$ is maximized with the L-BFGS-B method over $32$ unitary parameters and $10$ end-node angles. All reported values are stable to the quoted digits when the number of starting points is doubled. For the $W$ state, $\Nc^{\rm prod}$ changed by less than $10^{-6}$ between $16$ and $40$ starting points. Because local optimization finds lower bounds on a maximum, all numerical maxima are lower bounds.

\textit{Critical visibilities.} For a fixed strategy, every network value is proportional to $v$. To see this, expand $\rho_v^{\otimes2}$ into the four terms $v^2\psi\otimes\psi$, $v(1-v)\,\psi\otimes\id/8$, $(1-v)v\,\id/8\otimes\psi$ and $(1-v)^2\id/8\otimes\id/8$. In every term that contains $\id/8$, the particle of one end node is maximally mixed and uncorrelated with all other particles. Its traceless observables then have mean value zero, and the term does not contribute to $I$ or $J$. Hence $I$ and $J$ are multiplied by $v^2$, and $\sqrt{\abs I}+\sqrt{\abs J}$ by $v$. The same holds for the maximum over any class of strategies. A pure-state value $K$ therefore gives the unique threshold $v_c=1/K$. For the single-copy values $\Me$ and $\Sv$ the same scaling follows from $T_{ikm}\to vT_{ikm}$.

\textit{Random states.} The states of \cref{fig:random} were generated as follows. Pure states are normalized vectors with independent complex Gaussian entries, which are distributed uniformly with respect to the Haar measure. Rank-two states are $GG^\dagger/\Tr[GG^\dagger]$, with $G$ an $8\times2$ matrix of independent complex Gaussian entries. One third of all states were then mixed in equal proportion with the GHZ or the $W$ state, in order to include states with larger violations. The ensemble is therefore not uniform with respect to any standard measure. The smallest difference $\Nc-\Sv/4$ found was $0.03$, and the largest difference $\No-\Me/2$ vanished within numerical precision.

\textit{Further checks.} The local bounds \eqref{eq:ineq} and \eqref{eq:open} were tested on $2\times10^4$ random hidden-variable models with hidden variables taking three values, for $n=2,3,4$, and no violation was found. \Cref{eq:value} was compared with a direct evaluation on the six-qubit state $\rho\otimes\rho$ for random states and frames, and the two agreed to machine precision. The bounds of \cref{thm:parity} were tested for a star with $n\le4$ Bell pairs and a central node that measures in an arbitrary basis of its $n$ qubits. The optimization returned $1.0000$, $1.4142$, $1.2599$ and $1.4110$ for $n=1,\dots,4$, in agreement with the theorem. For $n=4$ the search over $256$ parameters stopped slightly below $\sqrt2$. For the open network with three GHZ sources it returned $1.7818$, which equals $2^{5/6}$. For the biseparable state of \cref{eq:bisep} we found $\Nc=\sqrt2$, $\Sv/4=1$, $\Me/2=\sqrt2$ and $\No=1$, and $\No=\sqrt2$ when the unentangled particle is moved to an end node.

\bibliography{refs2}
\end{document}